\theoremstyle{plain}
\newtheorem{assumption}{\protect\assumptionname}
\theoremstyle{definition}
\newtheorem*{defn*}{\protect\definitionname}
\theoremstyle{plain}
\newtheorem{lem}{\protect\lemmaname}
\theoremstyle{remark}
\newtheorem{rem}{\protect\remarkname}
\providecommand{\assumptionname}{Assumption}
\providecommand{\definitionname}{Definition}
\providecommand{\lemmaname}{Lemma}
\providecommand{\remarkname}{Remark}
\begin{document}
\title{Audits as Evidence: Experiments, Ensembles, and Enforcement\thanks{We thank Isaiah Andrews, Tim Armstrong, Kerwin Charles, Sendhil Mullainathan,
and Andres Santos for helpful conversations related to this project,
and John Nunley for providing data. We also thank participants at
the UC Berkeley labor and econometrics seminars, the Y-Rise External
Validity conference,  University of British Columbia, UC Irvine, Harris
School of Public Policy, Tinbergen Institute, Stockholm School of
Economics, and University of Oslo for useful feedback. Evan Rose and
Benjamin Scuderi provided outstanding research assistance. This project
was supported by a Russell Sage Foundation Presidential grant.}}
\author{Patrick Kline and Christopher Walters\\
 UC Berkeley and NBER}
\maketitle
\begin{abstract}
\begin{singlespace}
We develop tools for utilizing correspondence experiments to detect
illegal discrimination by individual employers. Employers violate
US employment law if their propensity to contact applicants depends
on protected characteristics such as race or sex. We establish identification
of higher moments of the causal effects of protected characteristics
on callback rates as a function of the number of fictitious applications
sent to each job ad. These moments are used to bound the fraction
of jobs that illegally discriminate. Applying our results to three
experimental datasets, we find evidence of significant employer heterogeneity
in discriminatory behavior, with the standard deviation of gaps in
job-specific callback probabilities across protected groups averaging
roughly twice the mean gap. In a recent experiment manipulating racially
distinctive names, we estimate that at least 85\% of jobs that contact
both of two white applications and neither of two black applications
are engaged in illegal discrimination. To assess more carefully the
tradeoff between type I and II errors presented by these behavioral
patterns, we consider the performance of a series of decision rules
for investigating suspicious callback behavior under a simple two-type
model that rationalizes the experimental data. Though, in our preferred
specification, only 17\% of employers are estimated to discriminate
on the basis of race, we find that an experiment sending 10 applications
to each job would enable accurate detection of 7-10\% of discriminators
while falsely accusing fewer than 0.2\% of non-discriminators. A minimax
decision rule acknowledging partial identification of the joint distribution
of callback rates yields higher error rates but more investigations
than our baseline two-type model. Our results suggest illegal labor
market discrimination can be reliably monitored with relatively small
modifications to existing audit designs.
\end{singlespace}
\end{abstract}
\bigskip{}

Keywords: Audit Study, Empirical Bayes, Discrimination, Indirect Evidence,
Partial Identification, False Discovery Rates, Large-Scale Inference

\bigskip{}
\bigskip{}
\bigskip{}

\pagebreak{}

\section{Introduction}

It is illegal to use information on age, race, or sex to make employment
decisions in the United States. Though economists are often called
upon to evaluate claims of illegal employment discrimination, academic
research in labor economics provides surprisingly little methodological
guidance for assessing whether particular employers are discriminating.
Rather, the focus of the voluminous empirical literature on labor
market discrimination \citep{altonji_blank,charles_guryan_jel} has
centered around methods for establishing whether \textit{markets}
discriminate against particular groups of workers on average. At least
since the work of \citet{becker_discrimination}, however, it has
been recognized that employers may vary substantially in the extent
to which they are discriminatory, and that this variation (in particular,
the difference in prejudice between the marginal and average firm)
influences the adverse impact of discrimination on outcomes for minority
workers \citep{charles_guryan_jpe}. It is therefore essential to
understand heterogeneity in discrimination across employers, both
for assessing the economic consequences of discrimination and for
enforcing anti-discrimination law.

This paper develops tools for detecting discrimination by individual
employers. The proposed methods rely on correspondence experiments
in which fictitious applications with randomly assigned characteristics
are submitted to actual job vacancies (\citealp{bertrand_duflo_review}
provide a review). A key advantage of correspondence studies over
traditional in-person audits is that the perceived traits of an applicant
can be independently manipulated, revealing the \emph{ceteris paribus}
influence of protected attributes such as race or gender on employer
behavior. Starting with the seminal work of \citet{bertrand_mullainathan_2004},
it has become standard to sample thousands of jobs and send each of
them four applications. Bertrand and Mullainathan found callback rates
to distinctively white names to be roughly 50\% higher than those
to distinctively black names, leading them to conclude that discrimination
was operative in the markets they studied. Our basic insight is that
such a study is best viewed as an \textit{ensemble} of many small
exchangeable experiments. From this ensemble, one can infer properties
of the distribution of discriminatory behavior which can, in turn,
be used to form posteriors about the probability that any given employer
is discriminating. These posteriors can then aid in making decisions
about which employers to investigate.

To ground our analysis, we develop a formal econometric framework
for analyzing correspondence studies. The foundation of this framework
is the assumption that callback outcomes at a particular job constitute
independent Bernoulli trials governed by job- and race-specific callback
probabilities. We show that this assumption is testable and document
empirical support for it in correspondence study data.  Treating the
pair of callback rates at each job as a random draw from a stable
super-population, we denote the joint cumulative distribution function
of white and black callback probabilities by $G\left(p_{w},p_{b}\right):\left[0,1\right]^{2}\rightarrow\left[0,1\right]$.
We then establish which moments of $G\left(\cdot,\cdot\right)$ are
identified as a function of the number of applications of each race
sent to each job. Though our focus is on correspondence studies, these
results are more broadly applicable to ensembles of randomized experiments
implemented across many sites.

Building on our identification results, we propose shape-constrained
Generalized Method of Moment (GMM) estimators of the identified moments
of the callback distribution that require the moment estimates be
rationalizable by a coherent bivariate probability distribution. We
apply these estimation methods to three experimental datasets: the
original \citet{bertrand_mullainathan_2004} study of racial discrimination,
a larger, more recent study by \citet{nunley_etal_2015} of racial
discrimination in the market for recent college graduates, and a study
by \citet{AGCV} that used eight applications per job. In each study,
we find overwhelming evidence of heterogeneity across jobs in the
extent of discrimination. In the more recent studies, where third
and higher moments are identified, we find evidence of skew and thick
tails in the distribution of discriminatory behavior: while most firms
barely discriminate, a few discriminate very heavily.

Next we consider what race-specific callback distributions $G\left(\cdot,\cdot\right)$
are consistent with the identified moments of the callback distribution.
Of particular interest is the fraction of jobs exhibiting any discrimination.
We derive an analytic lower bound on the fraction of jobs that engage
in discrimination conditional on the total number of callbacks. We
then show how sharp bounds can be computed via a linear programming
routine that works with a discrete approximation to $G\left(\cdot,\cdot\right)$
to characterize the relevant moment constraints. These bounds extend
some results in the literature on large scale inference concerned
with identification of the fraction of null hypotheses that are true
\citep{benjamini_hochberg,efron2001empirical,storey2002direct,efron2004large,efron2012large}.
We find that the linear programming bounds are significantly tighter
than our analytic bound and are informative even among the sub-population
of jobs calling back no applications. In the Bertrand and Mullanaithan
experiment, we estimate that at least half of the jobs calling back
one, two, or three of the four applications sent to each job are discriminating
based upon race. By contrast, as few as 20\% of the jobs that call
back all four applications are discriminating and as few as 5\% of
the jobs that call back no applications are discriminating.

These bounds on the fraction of jobs that discriminate constitute
a form of ``indirect evidence'' \citep{efron_2010} that can be
used to refine an assessment of whether any individual employer is
discriminating. Consider, for example, the case of a job sent four
applications that calls back only the two white applications. Under
the null hypothesis that callbacks do not depend on race at this job,
the probability of only the two white applications being contacted
given that two applications have been called back in total is $1/6$.
But in the Bertrand and Mullainathan data, we estimate that at most
56\% of the jobs that call back two applications in total are not
discriminating, so only a very weak presumption of innocence is justified.
Moreover, calling back only the white applications is relatively common,
occuring in 34\% of the cases where two total applications are called
back. Bayes' rule then implies the probability that such a job is
not discriminating is at most $\frac{1}{6}\times\frac{.56}{.34}\approx.27$.
Here, the indirect evidence tips the scale slightly in the employer's
favor but allows us to conclude that, at most, 27\% of such jobs are
not discriminating on the basis of race. This need not be the case
in general;  in the \citet{nunley_etal_2015} experiment, for example,
we estimate that at most 15\% of jobs that contact two white and zero
black applicants are not discriminating.

Making decisions based upon lower-bound posterior probabilities of
discrimination may yield overly conservative inferences. We develop
a decision theoretic framework formalizing the problem of a hypothetical
``auditor'' such as the Equal Employment Opportunity Commission
(EEOC) that has been charged with making decisions about whether to
investigate particular employers. Specifically, we consider Bayes
auditing rules under a linear loss function where each type I and
type II error incurs a fixed cost. The Bayes decision rule takes the
usual form, where investigations are conducted when the posterior
probability of discrimination crosses a fixed threshold. To approximate
the Bayes decision problem, we work with a mixed logit representation
of the data generating process (DGP) in the \citet{nunley_etal_2015}
experiment that also incorporates application characteristics other
than race. This model provides a good empirical fit to the \citet{nunley_etal_2015}
data  and  reproduces the qualitative patterns uncovered in our GMM
analysis. We use the mixed logit estimates to form empirical Bayes
posteriors of the probability that any given employer is discriminating.
For instance, the posterior probability that an employer that contacts
only the two white applications is discriminating is 62\%. But when
the two white applications have other characteristics indicating they
are of low quality and the two black applications have high-quality
characteristics, this posterior rises to 80\%.

We then compute the tradeoff between type I and type II errors implied
by the logit DGP under different experimental designs. With only two
white and two black applications per job, it is difficult to reliably
identify discriminating employers. But with just 10 applications per
job, we find that it is possible to correctly identify 7\% of discriminating
jobs with type I error rates of less than 0.2\%. Moreover, we show
that by optimizing over combinations of race and other resume characteristics
to maximize the amount of information generated by the experiment,
it is possible to boost the fraction of discriminators detected to
roughly 10\% while continuing to hold the type I error rate under
0.2\%. By contrast, conducting investigations based upon a frequentist
\emph{p-}value\emph{ }cutoff of 0.01 tends to yield substantially
more accusations, the majority of which are erroneous accusations
of non-discriminators.

Finally, we consider how our assessment of various auditing rules
changes if we relax the assumption that callbacks are actually generated
by the logit DGP and instead consider the broader class of callback
distributions capable of rationalizing the \citet{nunley_etal_2015}
experiment. A natural benchmark for decisionmaking with an unknown
risk function is the minimax auditing rule, which minimizes the maximum
risk that can arise given the identified moments of $G\left(\cdot,\cdot\right)$.
We develop a linear programming algorithm for computing an estimate
of the maximum risk function for classes of decision rules ordered
by their logit posteriors. Applying our algorithm, we find that the
gap between logit and worst case risk is decreasing in the share of
jobs investigated. This pattern leads a minimax auditor to investigate
more jobs than would an auditor who knew for certain that the logit
DGP governed behavior. The minimax auditor, it turns out, is more
concerned with the possibility that she is passing over a vast number
of jobs engaged in modest amounts of discrimination than that a few
non-discriminators are improperly investigated.

Our results highlight the potential of experimental methods to guide
regulatory enforcement. Because employers vary tremendously in their
propensity to discriminate against protected groups, regulators charged
with enforcing anti-discrimination laws face a difficult inferential
task. The methods developed here treat this task as an exercise in
large scale testing, which serves to discipline the conclusions drawn
regarding particular employers. Our findings suggest that accurately
monitoring illegal discrimination in online labor markets is feasible
with relatively small modifications to conventional audit designs.
Analogous methods are likely to be useful in other contexts in which
policymakers and researchers seek to draw conclusions about specific
individuals using noisy observations across many units. Candidates
for such applications include evaluations of teachers, schools, hospitals,
and neighborhoods \citep{chetty/friedman/rockoff:14b,ahpw_vam,hull_jmp,chettyhendren_neighborhoods_2,chetty_opportunity_atlas}.

The rest of the paper is structured as follows. The next section offers
a formal definition of employer discrimination in resume correspondence
experiments. Section 3 lays out the problem of an auditor seeking
to identify discriminatory employers with correspondence study data.
Section 4 develops identification results for moments of $G(\cdot,\cdot)$
and bounds on posterior probabilities of discrimination. Section 5
describes the data, and Section 6 uses it to test our modeling framework.
Section 7 provides estimated moments of $G(\cdot,\cdot)$ and Section
8 reports posterior bounds. Section 9 develops and estimates a mixed
logit model of callback decisions, and Section 10 uses the logit estimates
to evaluate prospects for detecting discrimination in alternative
experimental designs. Section 11 contrasts the logit results with
a minimax analysis acknowledging underidentification of $G(\cdot,\cdot)$.
Section 12 offers concluding thoughts.

\section{Defining Discrimination}

Title VII of the Civil Rights Act of 1964 prohibits employment discrimination
on the basis of race and sex, while the Age Discrimination Act of
1975 prohibits certain forms of discrimination on the basis of age.
Violations of these statutes typically involve one of two types of
claims. The first, \textit{disparate treatment}, consists of employment
practices that explicitly treat potential employees unequally based
upon protected characteristics. While economists have debated whether
such behavior arises from statistical discrimination versus racial
animus \citep{charles_guryan_jel}, the law makes no distinction between
these motives: both are clearly illegal \citep{kleinberg2019discrimination}.
The second sort of claim, \textit{disparate impact}, involves employment
practices that, while not explicitly based on protected characteristics,
clearly work to disadvantage members of such groups without offering
a corresponding productive justification.

Guided by these legal doctrines, we now develop a formal notion of
discrimination tailored to the analysis of correspondence studies.
To simplify exposition we focus on race, which we code as binary (``white''/
``black''), with the understanding that other protected characteristics
such as gender or age can play the same role. Suppose that we have
a sample of $J$ jobs with active vacancies. To each of these jobs,
we send $L_{w}$ applications with distinctively white names and $L_{b}$
applications with distinctively black names as in \citet{bertrand_mullainathan_2004},
for a total of $L=L_{w}+L_{b}$ applications. Denote the race associated
with the name used in application $\ell\in\left\{ 1,...,L\right\} $
to job $j\in\left\{ 1,...,J\right\} $ as $R_{j\ell}\in\left\{ w,b\right\} $.
Let the function $Y_{j\ell}\left(r\right):\left\{ w,b\right\} \rightarrow\left\{ 0,1\right\} $
denote whether job $j$ would call back application $\ell$ as a function
of that application's assigned race. Note that this definition of
potential outcomes builds in the Stable Unit Treatment Value Assumption
(SUTVA) of \citet{rubin_sutva} by ruling out dependence on the races
assigned to other resumes, including other applications to the same
job $\{R_{jk}\}_{k\neq\ell}$. Observed callbacks decisions are then
given by $Y_{j\ell}=Y_{j\ell}\left(R_{j\ell}\right)$.

When $Y_{j\ell}\left(w\right)\neq Y_{j\ell}\left(b\right)$ job $j$
has engaged in racial discrimination with application $\ell$. Notably,
even if racially distinctive names influence employer behavior only
through their role as a proxy for parental background \citep{fryer_levitt_2004},
using the names at any point in the hiring process is likely to be
viewed by courts as a pretext for discrimination (see, e.g., the discussion
in U.S. Equal Employment Opportunity Commission v. Target Corporation,
460 F.3d 946, 7th Cir. Wis. 2006\nocite{EEOCvTarget}). While courts
are typically interested in establishing whether a particular plaintiff
experienced discrimination in precisely this sense, we will take the
perspective of a regulator interested in assessing prospectively whether
an employer systematically treats applicants differently based upon
race. Such ``systematic'' forms of discrimination are also relevant
in establishing class certification in class action suits. Formalizing
this notion requires some additional assumptions regarding how employers
behave on average.

To this end, we work with the following representation of potential
outcomes:
\[
Y_{j\ell}(r)=Y_{j}(r,U_{j\ell}),
\]

\noindent where $Y_{j}(\cdot)$ is a job-specific decision rule and
$U_{j\ell}$ represents all factors that affect employer $j$'s decision
to contact application $\ell$ other than race. These factors include
the attributes other than race assigned to the resume as well as unobserved
fluctuations in conditions at the job such as changes in the time
available for reading applications. We normalize the marginal distribution
of $U_{j\ell}$ to be uniform, an assumption that is without loss
of generality since $Y_{j}(\cdot)$ is an unrestricted non-separable
function. Our first substantive assumption is that these unobserved
factors are independent across applications.
\begin{assumption}
\label{ass:1}The factors that influence callbacks are independent
and identically distributed across applications at each job:
\[
U_{j\ell}|R_{j1}...R_{jL}\overset{iid}{\sim}Uniform(0,1).
\]
\end{assumption}
\noindent Note that random assignment of racially distinctive names
to applications guarantees independence of $U_{j\ell}$ from $\{R_{jk}\}_{k=1}^{L}$.
The key behavioral restriction in Assumption 1 is that the $U_{j\ell}$
are mutually independent, which implies that each job's decision-making
is characterized by a stable decision rule applied independently to
each resume. This rules out, for example, a scenario in which the
job calls back the first qualified applicant and disregards all subsequent
applications. As we discuss later, this restriction turns out to be
testable, and we find that it provides a good empirical approximation
to behavior in the correspondence experiments we study.

Assumption \ref{ass:1} implies that each job is associated with a
stable pair of race-specific callback probabilities, defined as follows:
\[
p_{jr}\equiv\int_{0}^{1}Y_{j}\left(r,u\right)du,\ r\in\{b,w\}.
\]
The probability $p_{jr}$ may be interpreted as the callback rate
that would emerge in a hypothetical experiment in which a large number
of applications of race $r$ are sent to job $j$. Assumption 1 implies
that callbacks take the form of (race-specific) binomial trials governed
by these probabilities. Letting $C_{jr}=\sum_{\ell=1}^{L}1\left\{ R_{j\ell}=r\right\} Y_{j\ell}$
 denote the number of applications of race $r$ to job $j$ that were
called back, we can write the probability $\Pr\left(C_{jw}=c_{w},C_{jb}=c_{b}|p_{jw},p_{jb}\right)$
that employer $j$ calls back $c_{w}$ out of $L_{w}$ white applications
and $c_{b}$ out of $L_{b}$ black applications as:
\begin{eqnarray}
f\left(c_{w},c_{b}|p_{jw},p_{jb}\right) & = & \left(\begin{array}{c}
L_{w}\\
c_{w}
\end{array}\right)\left(\begin{array}{c}
L_{b}\\
c_{b}
\end{array}\right)p_{jw}^{c_{w}}\left(1-p_{jw}\right)^{L_{w}-c_{w}}p_{jb}^{c_{b}}\left(1-p_{jb}\right)^{L_{b}-c_{b}}.\label{eq:bin}
\end{eqnarray}

We are now ready to offer a definition of systematic discrimination,
which we will henceforth refer to simply as discrimination.
\begin{defn*}
\emph{Job $j$ engages in discrimination when $p_{jb}\neq p_{jw}$.}
\end{defn*}
\noindent We can now label discriminatory jobs with the indicator
function $D_{j}=1\{p_{jb}\neq p_{jw}\}$. Note that this definition
is prospective in that an employer with $D_{j}=1$ will eventually
discriminate against an applicant, though it may not do so in any
particular finite sample. Indeed, it is likely that many of the jobs
sampled in audit experiments are engaging in illegal discrimination
but have not discriminated against any of the fictitious applicants
in the study because none of the fictitious applicants would have
been called back regardless of their race.

\section{Ensembles and Decision Rules\label{sec:Ensembles}}

The above framework treats each job's callback decisions as a set
of race-specific Bernoulli trials. We next consider what can be learned
from a collection of experiments conducted at many jobs. This idea
is formalized in the following exchangeability assumption on the jobs.
\begin{assumption}
\label{ass:2}Race-specific callback probabilities are independent
and identically distributed:
\[
p_{jw},p_{jb}\overset{iid}{\sim}G\left(\cdot,\cdot\right).
\]
\end{assumption}
The distribution function $G\left(p_{w},p_{b}\right):\left[0,1\right]^{2}\rightarrow\left[0,1\right]$
describes the population of jobs from which a study samples. In practice,
audit studies usually draw small random samples of jobs from online
job boards. The \emph{iid} assumption abstracts from the fact that
there are a finite number of jobs on these boards. Note that by virtue
of random assignment (Assumption \ref{ass:1}) $p_{jw}$ and $p_{jb}$
are independent of the racial mix of applications to job $j$ as well
as any other resume characteristics that are randomized.

Assumption 2 implies that the unconditional distribution of callbacks
can be expressed as a mixture of binomial trials. Let $C_{j}\equiv\left(C_{jw},C_{jb}\right)$
denote the callback counts for job $j$. We denote the unconditional
probability of observing the callback vector $c=\left(c_{w},c_{b}\right)$
by 
\begin{eqnarray}
\bar{f}\left(c\right) & \equiv & \Pr\left(C_{j}=c\right)\nonumber \\
 & = & \int f\left(c_{w},c_{b}|p_{w},p_{b}\right)dG\left(p_{w},p_{b}\right).\label{eq: uncond}
\end{eqnarray}

The distribution $G\left(\cdot,\cdot\right)$ will serve as a key
object of interest in our analysis. One reason for interest in $G\left(\cdot,\cdot\right)$
is that it characterizes both the prevalence and extent of discrimination
in a population. For instance, the proportion of jobs that are \textit{not}
engaged in discrimination can be written:
\[
\pi^{0}\equiv\Pr\left(D_{j}=0\right)=\int dG\left(p,p\right).
\]

A second reason for interest in $G\left(\cdot,\cdot\right)$ lies
in its potential forensic value as a tool for identifying which jobs
are discriminating. The quantity 
\[
\pi\left(c\right)\equiv\Pr\left(D_{j}=1|C_{j}=c\right)
\]
gives the proportion of jobs with callback vector $c$ that are discriminating.
Though this quantity has a clear frequentist interpretation as the
fraction of discriminators that would be found under repeated sampling,
we can also think of it as giving a posterior probability of discrimination
given the ``evidence'' $C_{j}$. Specifically, invoking Bayes' rule,
we can write this posterior as a functional of the ``prior'' $G\left(\cdot,\cdot\right)$:
\begin{eqnarray*}
\pi\left(c\right) & = & \frac{\Pr\left(C_{j}=c|D_{j}=1\right)\left(1-\pi^{0}\right)}{\bar{f}\left(c\right)}\\
 & = & \frac{1-\pi^{0}}{\bar{f}\left(c\right)}\int_{p_{w}\neq p_{b}}f\left(c_{w},c_{b}|p_{w},p_{b}\right)dG\left(p_{w},p_{b}\right)\\
 & \equiv & \mathcal{P}\left(\underbrace{c}_{\text{direct}},\underbrace{G\left(\cdot,\cdot\right)}_{\text{indirect}}\right).
\end{eqnarray*}
The dependence of $\pi\left(c\right)$ on $G\left(\cdot,\cdot\right)$
is an example of what \citet{efron_2010} refers to as ``indirect
evidence.'' To understand the logic of incorporating indirect evidence,
suppose $\pi^{0}=1$ so that no jobs discriminate. Then $\pi\left(c\right)=0$
regardless of job $j$'s callback outcomes -- any seemingly suspicious
callback decisions are due to chance. Likewise, if $\pi^{0}=0$, all
jobs are discriminators and there is no need for direct evidence on
the behavior of particular jobs. But in intermediate cases, where
some fraction of jobs are discriminators, and some are not, it is
optimal to blend the direct evidence $C_{j}$ from a particular job
with contextual information on the population $G\left(\cdot,\cdot\right)$
from which that job was drawn to make decisions. We next  analyze
how exactly such indirect evidence should feature in decision-making
under uncertainty.

\subsection*{The Auditor's Problem}

Consider the problem of an auditor who knows the distribution $G\left(\cdot,\cdot\right)$
and aims to decide which jobs to investigate  for the presence of
illegal discrimination using a dataset of callbacks $\left\{ C_{j}\right\} _{j=1}^{J}$
as evidence. The auditor uses a deterministic decision rule $\delta\left(c\right):\left\{ 0,...,L_{w}\right\} \times\left\{ 0,...,L_{b}\right\} \rightarrow\left\{ 0,1\right\} $
that maps the callback vector $c=\left(c_{w},c_{b}\right)$ to a binary
inquiry decision.\footnote{We confine ourselves to deterministic rules because randomized decision
rules violate commonly held horizontal equity principles.}

The auditor's loss function from applying a decision rule $\delta\left(\cdot\right)$ to
a dataset of $J$ jobs is:
\begin{equation}
\mathcal{\mathcal{L}}_{J}\left(\delta\right)=\sum_{j=1}^{J}\left\{ \underbrace{\delta\left(C_{j}\right)\left(1-D_{j}\right)}_{\text{Type I}}\kappa+\underbrace{\left[1-\delta\left(C_{j}\right)\right]D_{j}}_{\text{Type II}}\gamma\right\} .\label{eq:loss}
\end{equation}
This loss function places a cost $\kappa$ on investigating ``innocent''
jobs with $D_{j}=0$ (type I errors) and a cost $\gamma$ of not investigating
``guilty'' jobs with $D_{j}=1$ (type II errors). Because the $\left\{ D_{j}\right\} _{j=1}^{J}$
are not known, the auditor minimizes expected loss (i.e. risk), which
we denote by $\mathcal{R}$:
\begin{eqnarray*}
\mathcal{R}_{J}\left(G,\delta\right) & \equiv & \mathbb{E}\left[\mathcal{L}_{J}\left(\delta\right)\right]\\
 & = & \sum_{j=1}^{J}\mathbb{E}\left[\delta\left(C_{j}\right)\left(1-\mathcal{P}\left(C_{j},G\right)\right)\kappa+\left[1-\delta\left(C_{j}\right)\right]\mathcal{P}\left(C_{j},G\right)\gamma\right]\\
 & = & J\mathbb{E}\left[\delta\left(C_{j}\right)\left(1-\mathcal{P}\left(C_{j},G\right)\right)\kappa+\left[1-\delta\left(C_{j}\right)\right]\mathcal{P}\left(C_{j},G\right)\gamma\right],
\end{eqnarray*}
where $\mathbb{E}\left[\cdot\right]$ denotes the expectation operator,
 the second line follows from iterated expectations, and the third
uses that the $\left\{ C_{j},D_{j}\right\} _{j=1}^{J}$ are $iid$
across jobs. The following Lemma, which mirrors a standard result
in statistical decision theory \citep[e.g.,][Theorem 8.11.1]{degroot2005optimal},
establishes that the optimal strategy of the auditor is to investigate
jobs that exceed a posterior threshold.
\begin{lem}[Posterior Threshold Rule]
The decision rule $\delta\left(C_{j}\right)=1\left\{ \mathcal{P}\left(C_{j},G\right)>\frac{\kappa}{\gamma+\kappa}\right\} $
minimizes $\mathcal{R}_{J}\left(G,\delta\right)$.\label{lem:lem1}
\end{lem}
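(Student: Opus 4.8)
The plan is to reduce the minimization of $\mathcal{R}_J(G,\delta)$ to a collection of independent scalar comparisons, one for each possible value of the callback vector. Because the jobs are $iid$ and $\delta$ depends on the data only through $C_j$, the risk equals $J$ times a single job's expected loss, which I would rewrite as a finite sum over the grid $\mathcal{C}\equiv\{0,\dots,L_w\}\times\{0,\dots,L_b\}$ weighted by the unconditional probabilities $\bar{f}(c)$:
\[
\frac{1}{J}\mathcal{R}_J(G,\delta)=\sum_{c\in\mathcal{C}}\bar{f}(c)\Big\{\delta(c)\big(1-\mathcal{P}(c,G)\big)\kappa+\big(1-\delta(c)\big)\mathcal{P}(c,G)\gamma\Big\}.
\]
Since every summand is nonnegative and the value $\delta(c)$ at one point places no restriction on $\delta(c')$ at another, minimizing the sum is equivalent to minimizing each summand separately, i.e. to solving, for every $c$ with $\bar{f}(c)>0$, the two-point problem $\min_{\delta(c)\in\{0,1\}}\big[\delta(c)(1-\mathcal{P}(c,G))\kappa+(1-\delta(c))\mathcal{P}(c,G)\gamma\big]$.

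Next I would carry out this scalar comparison. Setting $\delta(c)=1$ incurs cost $(1-\mathcal{P}(c,G))\kappa$ while setting $\delta(c)=0$ incurs cost $\mathcal{P}(c,G)\gamma$; investigating is therefore (weakly) optimal exactly when $(1-\mathcal{P}(c,G))\kappa\le\mathcal{P}(c,G)\gamma$, which --- using $\gamma,\kappa>0$ --- rearranges to $\mathcal{P}(c,G)\ge\kappa/(\gamma+\kappa)$. Hence every minimizer takes $\delta(c)=1$ when $\mathcal{P}(c,G)>\kappa/(\gamma+\kappa)$ and $\delta(c)=0$ when $\mathcal{P}(c,G)<\kappa/(\gamma+\kappa)$, while at the knife-edge value $\mathcal{P}(c,G)=\kappa/(\gamma+\kappa)$ both actions attain the minimum. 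Assembling the pointwise choices --- breaking the tie toward not investigating and making an arbitrary choice at any $c$ with $\bar{f}(c)=0$ --- yields precisely the rule $\delta(C_j)=1\{\mathcal{P}(C_j,G)>\kappa/(\gamma+\kappa)\}$ stated in the Lemma, which therefore minimizes $\mathcal{R}_J(G,\delta)$.

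I do not anticipate a genuine obstacle here; the result is the finite-action Bayes rule of \citet[Theorem 8.11.1]{degroot2005optimal} specialized to linear loss, with the posterior $\mathcal{P}(c,G)$ playing the role of the posterior probability of the relevant state. The only points meriting a sentence of care are (i) the legitimacy of the pointwise decomposition, which is immediate because $\mathcal{C}$ is finite and $\delta$ is an unrestricted function on $\mathcal{C}$, and (ii) the treatment of ties and of zero-probability callback vectors, where I would simply note that the strict inequality in the statement selects one particular element of the (possibly non-unique) set of Bayes-optimal rules.
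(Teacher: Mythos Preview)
Your proposal is correct and follows essentially the same approach as the paper: both rewrite the risk as a finite sum over callback configurations and minimize pointwise by comparing the two costs $(1-\mathcal{P}(c,G))\kappa$ and $\mathcal{P}(c,G)\gamma$. Your version is slightly more explicit about tie-breaking and zero-probability cells, but the argument is otherwise identical.
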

\begin{proof}
Risk can be rewritten:
\begin{eqnarray*}
\mathcal{R}_{J}\left(G,\delta\right) & = & J\sum_{c_{w}=0}^{L_{w}}\sum_{c_{b}=0}^{L_{b}}\int\left\{ \delta\left(c_{w},c_{b}\right)\left(1-\mathcal{P}\left(c_{w},c_{b},G\right)\right)\kappa+\left[1-\delta\left(c_{w},c_{b}\right)\right]\mathcal{P}\left(c_{w},c_{b},G\right)\gamma\right\} \\
 & \times & f\left(c_{w},c_{b}|p_{w},p_{b}\right)dG\left(p_{w},p_{b}\right).
\end{eqnarray*}
Minimizing this integral pointwise, we see that for any $c=\left(c_{w},c_{b}\right)$
such that $\mathcal{P}\left(c,G\right)<\frac{\kappa}{\gamma+\kappa}$,
the integrand is minimized by setting $\delta\left(c\right)=0$. Otherwise,
risk is minimized by setting $\delta\left(c\right)=1$.
\end{proof}
One can think of the decision rule $\delta\left(C_{j}\right)$ as
offering an economically motivated definition of ``reasonable doubt'':
when the posterior probability of discriminating crosses the cost-based
threshold $\kappa/\left(\kappa+\gamma\right)$, it is rational to
conduct an inquiry.
\begin{rem}
Recent work by economists emphasizes the role of preferences in the
optimal design of experiments \citep{manski2000identification,kitagawa2018should,narita2019experiment}.
In our setting, an auditor might benefit from choosing the application
design $(L_{w,}L_{b})$ in addition to the decision rule $\delta(\cdot)$
to minimize risk. We consider such an exercise empirically in Section
\ref{sec:DET}.
\end{rem}

\subsection*{Connection to Large Scale Testing}

An interesting connection exists between the auditor's problem and
the literature on large scale testing, which is concerned with deciding
which hypotheses to reject based upon the results of a very large
number of tests (\citealp{efron2012large} provides a review). A seminal
contribution to this literature comes from \citet{benjamini_hochberg},
who proposed controlling the False Discovery Rate (FDR): the expected
share of rejected null hypotheses that are true. We next show that
the auditor's optimal decision rule will control an analogue of the
FDR.

Letting $N_{J}\equiv\sum_{j=1}^{J}\delta(C_{j})$ denote the total
number of investigations resulting from the auditing rule $\delta(\cdot)$,
we can define the \textit{Positive} \textit{False Discovery Rate}
\citep{storey2003positive} as:
\begin{center}
$pFDR_{J}=\mathbb{E}\left[N_{J}^{-1}\sum_{j=1}^{J}\delta(C_{j})(1-D_{j})|N_{J}\geq1\right]$.
\par\end{center}

\noindent \begin{flushleft}
In words, $pFDR_{J}$ gives the proportion of investigated jobs that
are not discriminating, conditional on at least one investigation
taking place. The following Lemma establishes that the optimal decision
rule controls $pFDR_{J}$ at a level determined by the ratio $\kappa/\gamma$.
\par\end{flushleft}
\begin{lem}[$pFDR_{J}$ Control]
\label{lem:FDR}If $\delta\left(C_{j}\right)=1\left\{ \mathcal{P}\left(C_{j},G\right)>\frac{\kappa}{\gamma+\kappa}\right\} $
then $pFDR_{J}\leq\frac{\gamma}{\kappa+\gamma}$.
\end{lem}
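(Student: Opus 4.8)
The plan is to condition on the entire vector of callback counts $\mathcal{C}\equiv\{C_j\}_{j=1}^{J}$, exploit that both $\delta(C_j)$ and $N_J=\sum_j\delta(C_j)$ are $\mathcal{C}$-measurable, and then bound the conditional expected number of false discoveries termwise using the definition of the threshold rule. Throughout I assume $\Pr(N_J\ge 1)>0$, which is needed merely for the conditional expectation defining $pFDR_J$ to make sense.

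First I would apply the law of iterated expectations. Since $N_J$ is a function of $\mathcal{C}$, the event $\{N_J\ge 1\}$ is $\mathcal{C}$-measurable, so
\[
pFDR_J=\mathbb{E}\!\left[\left.\mathbb{E}\!\left[N_J^{-1}\sum_{j=1}^{J}\delta(C_j)(1-D_j)\,\Big|\,\mathcal{C}\right]\,\right|\,N_J\ge 1\right].
\]
On $\{N_J\ge 1\}$ the inner conditional expectation equals $N_J^{-1}\sum_{j}\delta(C_j)\,\mathbb{E}[1-D_j\mid\mathcal{C}]$, where pulling $N_J^{-1}$ out is legitimate precisely because $N_J$ is $\mathcal{C}$-measurable. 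Because the pairs $(C_j,D_j)$ are $iid$ across $j$ (Assumption \ref{ass:2} combined with the Bernoulli structure of Assumption \ref{ass:1}), $D_j$ is independent of $\{C_k\}_{k\neq j}$ given $C_j$, so $\mathbb{E}[1-D_j\mid\mathcal{C}]=\mathbb{E}[1-D_j\mid C_j]=1-\mathcal{P}(C_j,G)$ by the definition of $\mathcal{P}$ in Section \ref{sec:Ensembles}.

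Second, I would bound this quantity termwise. Whenever $\delta(C_j)=1$, the threshold rule forces $\mathcal{P}(C_j,G)>\kappa/(\gamma+\kappa)$, hence $1-\mathcal{P}(C_j,G)<\gamma/(\gamma+\kappa)$. There are exactly $N_J$ indices with $\delta(C_j)=1$, so on $\{N_J\ge 1\}$,
\[
N_J^{-1}\sum_{j=1}^{J}\delta(C_j)\bigl(1-\mathcal{P}(C_j,G)\bigr)<N_J^{-1}\cdot N_J\cdot\frac{\gamma}{\gamma+\kappa}=\frac{\gamma}{\gamma+\kappa}.
\]
Taking the conditional expectation given $N_J\ge 1$ of both sides then yields $pFDR_J\le\gamma/(\kappa+\gamma)$, which is the claim (indeed the inequality is strict).

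I do not anticipate a genuine obstacle: the argument is essentially bookkeeping once the conditioning is set up correctly. The only two points that require care are (i) the interchange of the random normalizer $N_J^{-1}$ with the conditional expectation, which works because $N_J$ is $\mathcal{C}$-measurable, and (ii) the reduction $\mathbb{E}[D_j\mid\mathcal{C}]=\mathbb{E}[D_j\mid C_j]=\mathcal{P}(C_j,G)$, which is exactly where the cross-job independence in Assumption \ref{ass:2} enters.
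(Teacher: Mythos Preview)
Your proof is correct. It differs from the paper's in structure: the paper first invokes Storey's identity $pFDR_J=\Pr(D_j=0\mid\delta(C_j)=1)$ (reproved in Appendix~\ref{sec:pFDR_appx}) and then bounds that single-job conditional probability via $\mathbb{E}[1-\mathcal{P}(C_j,G)\mid\mathcal{P}(C_j,G)>\kappa/(\gamma+\kappa)]<\gamma/(\gamma+\kappa)$, whereas you condition directly on the full callback vector $\mathcal{C}$ and bound the average of $1-\mathcal{P}(C_j,G)$ over accused jobs termwise. Your route is more self-contained for this lemma and sidesteps the appendix argument entirely; the paper's route has the advantage of delivering the Storey identity as an intermediate result, which is reused immediately afterward to analyze the $pFDR$ of a classical size-$\tilde{\alpha}$ test.
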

\begin{proof}
Storey \citeyearpar[Theorem 1]{storey2003positive} showed that $pFDR_{J}=\Pr(D_{j}=0|\delta(C_{j})=1)$
for any deterministic decision rule $\delta(\cdot)$ obeying $\Pr\left(\delta(C_{j})=1\right)>0$
(see Appendix \ref{sec:pFDR_appx} for a self-contained proof of this
result). Therefore the optimal auditing rule $\delta\left(C_{j}\right)=1\left\{ \mathcal{P}\left(C_{j},G\right)>\frac{\kappa}{\gamma+\kappa}\right\} $
yields
\begin{eqnarray*}
pFDR_{J} & = & \Pr\left(D_{j}=0|\mathcal{P}\left(C_{j},G\right)>\frac{\kappa}{\gamma+\kappa}\right)\\
 & \leq & \Pr\left(D_{j}=0|\mathcal{P}\left(C_{j},G\right)=\frac{\kappa}{\gamma+\kappa}\right)=1-\frac{\kappa}{\gamma+\kappa}.
\end{eqnarray*}
\end{proof}
By contrast, consider an auditor who bases investigations on a classical
hypothesis test $\delta^{\dagger}\left(C_{j}\right)$ that controls
size at a fixed level $\tilde{\alpha}<1$. To simplify exposition,
suppose that the test is pivotal under the null of non-discrimination
so that 
\[
\Pr\left(\delta^{\dagger}\left(C_{j}\right)=1|p_{jw}=p,p_{jb}=p\right)=\tilde{\alpha},\quad\forall p\in\left[0,1\right].
\]
We can write the resulting $pFDR_{J}$ of this rule
\begin{eqnarray*}
\Pr\left(D_{j}=0|\delta^{\dagger}\left(C_{j}\right)=1\right) & = & \frac{\Pr\left(\delta^{\dagger}\left(C_{j}\right)=1|D_{j}=0\right)\pi^{0}}{\Pr\left(\delta^{\dagger}\left(C_{j}\right)=1|D_{j}=0\right)\pi^{0}+\Pr\left(\delta^{\dagger}\left(C_{j}\right)=1|D_{j}=1\right)\left(1-\pi^{0}\right)}\\
 & \geq & \frac{\tilde{\alpha}\pi^{0}}{\tilde{\alpha}\pi^{0}+1-\pi^{0}}.
\end{eqnarray*}
To see that $\delta^{\dagger}\left(C_{j}\right)$ fails to control
$pFDR_{J}$, note that $\lim_{\pi^{0}\uparrow1}\frac{\tilde{\alpha}\pi^{0}}{\tilde{\alpha}\pi^{0}+1-\pi^{0}}=1$:
when nearly all jobs are innocent, classical hypothesis testing will
result in the vast majority of investigations being false accusations.
\begin{rem}
\label{rem:FDR}The \textit{False Discovery Rate} of \citet{benjamini_hochberg}
can be written $FDR_{J}=pFDR_{J}\times\Pr\left(N_{J}\geq1\right)$.
Because $\Pr\left(N_{J}\geq1\right)\leq1$, the optimal auditing rule
also controls $FDR_{J}$.
\end{rem}
\begin{rem}
The auditor's risk can be written 
\[
\mathcal{R}_{J}\left(G,\delta\right)=J\left\{ \kappa\times pFDR_{J}\times\Pr\left(\delta\left(C_{j}\right)=1\right)+\gamma\times pFNR_{J}\times\left[1-\Pr\left(\delta\left(C_{j}\right)=1\right)\right]\right\} ,
\]
where $pFNR_{J}=\mathbb{E}[\left(J-N_{J}\right)^{-1}\sum_{j=1}^{J}\left(1-\delta(C_{j})\right)D_{j}|N_{J}<J]$
is the \textit{Positive False Nondiscovery Rate} \citep[Corollary 4]{storey2003positive}.
Hence, the auditor's marginal rate of substitution between the Positive
False Discovery and Positive False Nondiscovery rates is $\frac{\kappa}{\gamma}\frac{\Pr\left(\delta\left(C_{1}\right)=1\right)}{1-\Pr\left(\delta\left(C_{1}\right)=1\right)}$.
\end{rem}

\subsection*{Auditing under Ambiguity}

The distribution $G\left(\cdot,\cdot\right)$ will not, in general,
be point identified even by experiments with many applications per
job. When $G$ is only known to lie in some identified set $\Theta$
of distributions, many possible decision rules are consistent with
rationality. Among those rules, an important benchmark is the minimax
decision rule \citep{wald1945statistical,savage1951theory,manski2000identification},
which minimizes the maximum risk that may arise from the experiment.
We define the maximum risk function and the associated minimax decision
rule respectively as:
\begin{equation}
\mathcal{R}_{J}^{m}\left(\Theta,\delta\right)\equiv\sup_{G\in\Theta}\mathcal{R}_{J}\left(G,\delta\right)\quad\text{and}\quad\delta^{mm}\equiv\arg\inf_{\delta\in\mathscr{D}}\mathcal{R}_{J}^{m}(\Theta,\delta),\label{eq:max_risk}
\end{equation}
where $\mathscr{D}$ is the set of deterministic decision rules.

Unlike in the case where $G\left(\cdot,\cdot\right)$ is known, an
auditor that only knows $G\in\Theta$ cannot consult a single posterior
probability to make the decision of whether to investigate. Rather,
the maximum risk of each decision rule must be computed to obtain
the minimax decision rule. The next section establishes more carefully
what features of $G\left(\cdot,\cdot\right)$ are identified by a
given experimental design and provides an approach to computing $\mathcal{R}^{m}\left(\Theta,\delta\right)$.
\begin{rem}
Rules that minimize maximum risk over a restricted set $\Gamma$ of
distributions were considered by \citet{hodges1952use} and \citet{robbins1964empirical}
and are sometimes referred to as $\Gamma-$minimax estimators \citep[see, e.g.,][]{berger1979gamma,noubiap2001algorithm,lehmann2006theory,berger2013statistical}.
While the statistics literature has typically chosen the set of candidate
distributions based upon prior beliefs, the definition in (\ref{eq:max_risk})
restricts consideration to distributions that match the identified
features of $G(\cdot,\cdot)$.
\end{rem}
\begin{rem}
The minimax decision rule $\delta^{mm}$ is also a Bayes rule if there
is a $\bar{G}\in\varTheta$ such that $\arg\inf_{\delta\in\mathscr{D}}\mathcal{R}_{J}\left(\bar{G},\delta\right)=\delta^{mm}$.
When such a $\bar{G}$ exists, we call it a least favorable distribution
because $\bar{G}\in\arg\sup_{G\in\Theta}\mathcal{R}_{J}\left(G,\delta^{mm}\right)$
\citep[see, e.g.,][]{lehmann2006theory}.
\end{rem}
\begin{rem}
One can think of the minimax decision rule $\delta^{mm}$ as an estimator
of the latent discrimination indicators $\left\{ D_{j}\right\} $$_{j=1}^{J}$.
It is interesting to contrast $\delta^{mm}$ with standard ``shrinkage''
estimators, which are typically motivated by appeal to a parametric
mixing distribution that serves the role of a prior \citep{kane/staiger:08,chetty/friedman/rockoff:14a,ahpw_vam,chettyhendren_neighborhoods_2,finkelstein_nejm_shrinkage}.
When it has a Bayes interpretation, the minimax estimator $\delta^{mm}$
can be thought of as shrinking towards a least favorable prior distribution
$\bar{G}$.
\end{rem}

\section{Identification of $G$}

In this section we establish that certain moments of $G\left(\cdot,\cdot\right)$
are non-parametrically identified and then proceed to derive bounds
on the posterior probability function $\pi\left(c\right)$.

\subsection*{Moments}

From (\ref{eq: uncond}) we can write
\begin{center}
$\bar{f}(c_{w},c_{b})=\left(\begin{array}{c}
L_{w}\\
c_{w}
\end{array}\right)\left(\begin{array}{c}
L_{b}\\
c_{b}
\end{array}\right)\mathbb{E}\left[p_{jw}^{c_{w}}\left(1-p_{jw}\right)^{L_{w}-c_{w}}p_{jb}^{c_{b}}\left(1-p_{jb}\right)^{L_{b}-c_{b}}\right]$
\begin{equation}
=\left(\begin{array}{c}
L_{w}\\
c_{w}
\end{array}\right)\left(\begin{array}{c}
L_{b}\\
c_{b}
\end{array}\right)\sum_{x=0}^{L_{w}-c_{w}}\sum_{s=0}^{L_{b}-c_{w}}(-1)^{x+s}\left(\begin{array}{c}
L_{w}-c_{w}\\
x
\end{array}\right)\left(\begin{array}{c}
L_{b}-c_{b}\\
s
\end{array}\right)\mathbb{E}\left[p_{jw}^{c_{w}+x}p_{jb}^{c_{b}+s}\right].\label{eq: mom}
\end{equation}
\par\end{center}

\begin{flushleft}
Hence, the reduced form callback probabilities can be written as linear
functions of uncentered moments $\mu\left(m,n\right)\equiv\mathbb{E}\left[p_{jw}^{m}p_{jb}^{n}\right]=\int p_{w}^{m}p_{b}^{n}dG\left(p_{w},p_{b}\right)$
of the latent callback probabilities.
\par\end{flushleft}

Letting $\bar{f}=\left(\bar{f}\left(1,0\right),...,\bar{f}\left(L_{w},0\right),...,\bar{f}\left(L_{w},L_{b}\right)\right)'$
denote the vector of frequencies for all possible callback outcomes
excluding $(0,0)$ and $\mu=\left(\mu\left(1,0\right),...,\mu\left(L_{w},0\right),...,\mu\left(L_{w},L_{b}\right)\right)'$
the corresponding list of moments, we can write the equations in (\ref{eq: mom})
as a linear system:
\[
\bar{f}=B\mu,
\]
where $B$ is a known non-singular square matrix of binomial coefficients.
We can therefore solve the linear system as:
\begin{equation}
\mu=B^{-1}\bar{f}.\label{eq:inverse}
\end{equation}
Hence, for a given application design $(L_{w},L_{b})$, all moments
$\mu(m,n)$ for $0\leq m\leq L_{w}$ and $0\leq n\leq L_{b}$ are
identified.

From $\mu$ we can compute centered moments of the callback distribution,
which are typically easier to interpret. An example of particular
interest is the standard deviation of discrimination across jobs:
\begin{eqnarray*}
\mathbb{V}\left[p_{jb}-p_{jw}\right]^{1/2} & = & \sqrt{\mathbb{E}\left[\left(p_{jb}-p_{jw}\right)^{2}\right]-\mathbb{E}\left[p_{jb}-p_{jw}\right]^{2}}\\
 & = & \sqrt{\mu\left(0,2\right)+\mu\left(2,0\right)-2\mu\left(1,1\right)-\mu\left(0,1\right)^{2}-\mu\left(1,0\right)^{2}+2\mu\left(0,1\right)\mu\left(1,0\right)}.
\end{eqnarray*}
This quantity is identified for any application design that sends
at least two resumes per racial group ($\min\left\{ L_{w},L_{b}\right\} \geq2$).
\begin{rem}
The variance of unit-specific treatment effects is typically treated
as under-identified in standard analyses of randomized experiments
\citep{neyman_1923,heckman_smith_clements,imbens_rubin_2015}. Identification
is secured here by the assumption that callbacks are generated by
race-specific binomial trials with stable probabilities, essentially
allowing repeated observations on the same units with and without
treatment.
\end{rem}
\begin{rem}
In experiments where the application design ($L_{w},L_{b})$ varies
randomly across jobs, some moments of $G(\cdot,\cdot)$ are over-identified.
We exploit these over-identifying restrictions in estimation to improve
precision and test our modeling assumptions.
\end{rem}

\subsection*{A Bound on Posterior Probabilities}

Though the study of moments of the callback distribution $G(\cdot,\cdot)$
can shed light on underlying heterogeneity in callback behavior, the
posterior probability $\pi\left(c\right)$ need not admit a representation
in terms of a finite number of moments. However, a simple analytic
bound on the posterior can be derived from an application of Bayes
rule that conditions on the total number of callbacks $C_{jb}+C_{jw}$
to firm $j$. Let $\bar{f}_{t}\left(c\right)\equiv\Pr\left((C_{jw},C_{jb})=(c_{w},c_{b})|C_{jb}+C_{jw}=t\right)$
denote the probability mass function for callbacks in the stratum
of jobs that call back $t$ applicants in total. Similarly, let $\pi_{t}^{0}\equiv\Pr\left(D_{j}=0|C_{jb}+C_{jw}=t\right)$
denote the share of jobs in this stratum that are  not discriminating.
Letting $t=c_{w}+c_{b}$, we can then write the posterior probability
of innocence as:
\begin{equation}
1-\pi\left(c\right)=\Pr\left(C_{j}=c|D_{j}=0,C_{jb}+C_{jw}=t\right)\frac{\pi_{t}^{0}}{\bar{f}_{t}\left(c\right)}.\label{eq: Bayes_cond}
\end{equation}

Note that after conditioning on the total number of callbacks, the
callback likelihood for non-discriminators is free of nuisance parameters:
\[
\Pr\left(C_{j}=c|D_{j}=0,C_{jb}+C_{jw}=t\right)=\left(\begin{array}{c}
L_{w}\\
c_{w}
\end{array}\right)\left(\begin{array}{c}
L_{b}\\
c_{b}
\end{array}\right)/\left(\begin{array}{c}
L\\
t
\end{array}\right)\equiv\bar{f}_{t}^{0}\left(c\right).
\]
This ratio forms the basis of Fisher's exact test for independence
in contingency tables \citep{fisher_1922}. For example, with two
white and two black applications and two callbacks, the probability
of both callbacks being to white applications under the null hypothesis
of non-discrimination is
\begin{center}
$\bar{f}_{2}^{0}\left(2,0\right)=\left(\begin{array}{c}
2\\
2
\end{array}\right)\left(\begin{array}{c}
2\\
0
\end{array}\right)/\left(\begin{array}{c}
4\\
2
\end{array}\right)=\dfrac{1}{6}.$
\par\end{center}

Since the function $\bar{f}_{t}\left(c\right)$ is directly identified
by random sampling, the sole under-identified quantity in equation
(\ref{eq: Bayes_cond}) is $\pi_{t}^{0}$, which serves as the auditor's
prior probability that an employer is innocent knowing only that it
made $t$ callbacks in total. The following Lemma provides a tractable
bound on this quantity.
\begin{lem}[Upper Bound on Stratum Prior]
\label{lem:analytical_bound} $\pi_{t}^{0}\leq\min_{c:c_{w}+c_{b}=t}\min\left\{ \frac{\bar{f}_{t}\left(c\right)}{\bar{f}_{t}^{0}\left(c\right)},\frac{1-\bar{f}_{t}\left(c\right)}{1-\bar{f}_{t}^{0}\left(c\right)}\right\} .$
\end{lem}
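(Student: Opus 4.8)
The plan is to apply Bayes' rule within the stratum of jobs that make $t$ callbacks in total --- once to the event $\{C_j = c\}$ and once to its complement $\{C_j \neq c\}$ --- and to exploit the fact that each resulting posterior probability of innocence must lie in $[0,1]$. The key simplification, already recorded in the excerpt, is that conditional on $D_j = 0$ and $C_{jb}+C_{jw}=t$ the callback likelihood is the nuisance-free weight $\bar f_t^0(c) = \binom{L_w}{c_w}\binom{L_b}{c_b}/\binom{L}{t}$, so both applications of Bayes' rule involve only the target $\pi_t^0$ and the directly identified stratum frequencies $\bar f_t(\cdot)$.

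Concretely, fix any feasible $c = (c_w,c_b)$ with $c_w+c_b = t$, $0\le c_w\le L_w$, $0\le c_b\le L_b$. Working throughout within the stratum $\{C_{jb}+C_{jw}=t\}$, Bayes' rule gives, exactly as in (\ref{eq: Bayes_cond}), that $\Pr(D_j=0 \mid C_j = c) = \bar f_t^0(c)\,\pi_t^0/\bar f_t(c)$; since this is a probability it is at most $1$, and since $\bar f_t^0(c)>0$ we may rearrange to obtain $\pi_t^0 \le \bar f_t(c)/\bar f_t^0(c)$. Applying Bayes' rule in the same stratum to the complementary event and using $\Pr(C_j\neq c \mid D_j=0) = 1-\bar f_t^0(c)$ gives $\Pr(D_j = 0 \mid C_j \neq c) = (1-\bar f_t^0(c))\,\pi_t^0/(1-\bar f_t(c)) \le 1$, which rearranges (when $\bar f_t^0(c) < 1$) to $\pi_t^0 \le (1-\bar f_t(c))/(1-\bar f_t^0(c))$.

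Both bounds hold for every feasible $c$ with $c_w+c_b=t$, so $\pi_t^0$ is dominated by the minimum over all such $c$ of $\min\{\bar f_t(c)/\bar f_t^0(c),\ (1-\bar f_t(c))/(1-\bar f_t^0(c))\}$, which is the asserted inequality. Equivalently --- and perhaps the cleanest way to present it --- one can bypass Bayes' rule and note that the stratum frequency is a two-component mixture,
\[
\bar f_t(c) = \bar f_t^0(c)\,\pi_t^0 + \Pr(C_j = c \mid D_j = 1,\ C_{jb}+C_{jw}=t)\,(1-\pi_t^0) \ \ge\ \bar f_t^0(c)\,\pi_t^0,
\]
with the analogous lower bound $1-\bar f_t(c) \ge (1-\bar f_t^0(c))\,\pi_t^0$ for the complementary event; dropping the nonnegative discriminator term yields the two ratios at once.

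There is no genuinely hard step here; the only thing needing care is the degenerate stratum in which $c$ is the \emph{only} feasible callback vector with $c_w+c_b=t$ (for instance $t=0$, or $t=L$). There $\bar f_t^0(c)=1$, so the second ratio is the indeterminate $0/0$; but in that case $\bar f_t(c)=1$ as well, so the first ratio equals $1$ and the bound is (correctly) vacuous, reflecting that a stratum with a single possible outcome is uninformative about discrimination. Whenever at least two callback vectors are feasible in the stratum, every $\bar f_t^0(c)\in(0,1)$ and both ratios are well defined.
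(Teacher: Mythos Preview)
Your proof is correct and essentially coincides with the paper's: the paper writes the stratum frequency as the two-component mixture $\bar f_t(c)=\bar f_t^0(c)\pi_t^0+\bar f_t^1(c)(1-\pi_t^0)$ and notes that $\bar f_t^1(c)\in[0,1]$ immediately yields both ratio bounds, which is exactly your ``cleanest way'' paragraph (your Bayes'-rule framing is the same inequality read as a posterior $\le 1$). Your additional remark on the degenerate strata $t=0$ and $t=L$ is a nice clarification the paper omits.
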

\begin{proof}
By the law of total probability:
\[
\bar{f}_{t}\left(c\right)=\bar{f}_{t}^{0}\left(c\right)\pi_{t}^{0}+\bar{f}_{t}^{1}\left(c\right)\left(1-\pi_{t}^{0}\right),
\]
where $\bar{f}_{t}^{1}\left(c\right)\equiv\Pr\left(C_{j}=c|D_{j}=1,C_{jb}+C_{jw}=t\right)$.
The result follows immediately from observing that $\bar{f}_{t}^{1}\left(c\right)\in\left[0,1\right]$.
\end{proof}
Plugging the upper bound of Lemma \ref{lem:analytical_bound} into
(\ref{eq: Bayes_cond}) therefore yields a \textit{lower} bound on
the posterior probability of discrimination: 
\begin{equation}
\pi\left(c\right)\geq1-\frac{\bar{f}_{t}^{0}\left(c\right)}{\bar{f}_{t}\left(c\right)}\min_{c^{\prime}:c_{w}^{\prime}+c_{b}^{\prime}=t}\min\left\{ \frac{\bar{f}_{t}\left(c^{\prime}\right)}{\bar{f}_{t}^{0}\left(c^{\prime}\right)},\frac{1-\bar{f}_{t}\left(c^{\prime}\right)}{1-\bar{f}_{t}^{0}\left(c^{\prime}\right)}\right\} .\label{eq: Bayes_lower}
\end{equation}

\begin{rem}
A bound of the sort derived in Lemma \ref{lem:analytical_bound} was
used by \citet[p. 1154]{efron2001empirical} to control $FDR_{J}$
in a multiple testing analysis of a microarray experiment. \citet{storey2002direct}
proposed a related class of upper bounds that are generally looser,
but easier to estimate (see \citealp{armstrong2015adaptive} for an
approach to inference on these bounds).
\end{rem}

\subsection*{Sharp Bounds}

While the bounds in Lemma \ref{lem:analytical_bound} are easy to
compute, they need not be sharp, as restrictions across strata defined
by the number of callbacks $C_{jb}+C_{jw}$ have been ignored. An
upper bound on the prior $\pi_{t}^{0}$ that exploits all of the logical
restrictions in our framework can be written as the solution to the
following constrained optimization problem:
\begin{equation}
\max_{G\left(\cdot,\cdot\right)\in\mathscr{G}}\frac{\left(\begin{array}{c}
L\\
t
\end{array}\right)}{\sum_{\left(c_{w}',c_{b}'\right):c_{w}'+c_{b}'=t}\bar{f}\left(c_{w}',c_{b}'\right)}\int p^{t}\left(1-p\right)^{L-t}dG\left(p,p\right),\label{eq:LP}
\end{equation}

\begin{align}
s.t.\enskip\bar{f}\left(c_{w},c_{b}\right) & =\left(\begin{array}{c}
L_{w}\\
c_{w}
\end{array}\right)\left(\begin{array}{c}
L_{b}\\
c_{b}
\end{array}\right)\int p_{w}^{c_{w}}\left(1-p_{w}\right)^{L_{w}-c_{w}}p_{b}^{c_{b}}\left(1-p_{b}\right)^{L_{b}-c_{b}}dG\left(p_{w},p_{b}\right),\label{eq:LPcons}\\
\text{for} & \left(c_{w}=0,..,L_{w};c_{b}=0,..,L_{b}\right).\nonumber 
\end{align}

To make this problem computationally tractable, we consider a space
$\mathscr{G}$ of discretized approximations to the unknown distribution
function $G\left(\cdot,\cdot\right)$ (see \citealp{noubiap2001algorithm}
and \citealp{tebaldi_torgovitsky_yang} for related approaches). Because
both the objective and constraints are linear in the probability mass
function associated with $G\left(\cdot,\cdot\right)$, we can apply
linear programming routines to compute bounds given an estimate of
the callback probabilities $\left\{ \bar{f}\left(c_{w},c_{b}\right)\right\} _{c_{w},c_{b}}$
(we defer the discussion of estimation to Section \ref{sec:Moment-Estimates}).
Details of our computational procedure are given in  Appendix \ref{sec:LP_appx}.
\begin{rem}
Unlike the conditional bounds of Lemma \ref{lem:analytical_bound},
the solution to (\ref{eq:LP}) enforces constraints across callback
strata. As a result, we can obtain informative bounds on the fraction
of discriminatory jobs even among those that call no applications
back.
\end{rem}
\begin{rem}
Analogous linear programming formulations can be used to bound any
linear functional of $G(\cdot,\cdot)$, including other measures of
discrimination. For example, we can bound from below the fraction
of employers discriminating against whites by replacing the objective
in (\ref{eq:LP}) with $\min_{G\left(\cdot,\cdot\right)\in\mathscr{G}}\int_{p_{w}<p_{b}}dG\left(p_{w},p_{b}\right)$.
We leverage this insight to bound a variety of features of $G(\cdot,\cdot)$
in the empirical work to follow.
\end{rem}

\subsection*{Maximum risk}

Relying on a discretized function space $\mathscr{G}$ also enables
us to compute the maximum risk function $\mathcal{R}_{J}^{m}$ consistent
with a set of experimental callback probabilities.\footnote{See \citet{muller2019nearly} for a closely related approach to minimizing
weighted average risk involving discretization of unbiasedness constraints.} For a given decision rule $\delta\left(\cdot\right)$, $\mathcal{R}_{J}^{m}\left(\delta\right)$
can be expressed as the solution to the following optimization problem:
\begin{eqnarray*}
\mathcal{R}_{J}^{m}(\delta) & = & \max_{G\in\mathscr{G}}J\sum_{c_{w}=1}^{L_{w}}\sum_{c_{b}=1}^{L_{b}}\int\left\{ \delta\left(c_{w},c_{b}\right)1\left\{ p_{w}=p_{b}\right\} \kappa+\left[1-\delta\left(c_{w},c_{b}\right)\right]1\left\{ p_{w}\neq p_{b}\right\} \gamma\right\} \\
 & \times & f\left(c_{w},c_{b}|p_{w},p_{b}\right)dG\left(p_{w},p_{b}\right)\quad s.t.\ \left(\ref{eq:LPcons}\right).
\end{eqnarray*}
When $\mathscr{G}$ is a family of discrete distributions, the objective
and constraints are both linear in the probability masses associated
with $G\left(\cdot,\cdot\right)$ and $\mathcal{R}_{J}^{m}\left(\delta\right)$
can be computed numerically as the solution to a linear programming
problem. The minimax decision rule $\delta^{mm}\left(\cdot\right)$
can be found by computing $\mathcal{R}_{J}^{m}(\delta)$ for each
candidate rule $\delta\in\mathscr{D}$ and choosing the rule that
yields lowest maximal risk.
\begin{rem}
With $L_{w}$ white and $L_{b}$ black applications there are $2^{\left(1+L_{w}\right)\left(1+L_{b}\right)}$
distinct rules to consider which, in practice, prohibits brute force
enumeration when $L_{w}+L_{b}>4$. To circumvent this obstacle, we
consider in our empirical application a restricted set $\mathscr{D}^{\dagger}\subset\mathscr{D}$
of decision rules that rely upon a nominal ordering of $\mathscr{D}$
and a threshold rule.
\end{rem}

\section{Data}

We apply our methods to data from three correspondence experiments
summarized in Table I. Bertrand and Mullainathan (BM, 2004) applied
to 1,112 job openings in Boston and Chicago, submitting four applications
to each job. Of the four applications, two were assigned black-sounding
names while the remaining two were assigned white-sounding names.
The callback rate to applications with black sounding names was 3.1
percentage points lower than to applications with white sounding names.

Nunley et al. (2015) studied racial discrimination in the market for
new college graduates by applying to 2,305 listings on an online job
board, again sending four resumes per job opening. Unlike Bertrand
and Mullainathan, the names assigned to the four resumes were sampled
without replacement from a pool of eight names, four of which were
distinctively black and four of which were distinctively white. This
led the fraction of names sent to each job that were distinctively
black to vary randomly in increments of 25\% from 0\% to 100\%. Interestingly,
the average callback rate in the Nunley study was more than twice
as high as in the Bertrand and Mullainathan study, perhaps because
the fictitious applicants were more highly educated. On average, black
sounding names had a 2.6 percentage point lower callback rate than
white names.

Arceo-Gomez and Campos-Vasquez (AGCV, 2014) applied to 802 job openings
through an online job portal in a study of race and gender discrimination
in Mexico City, Mexico. They sent eight fictitious applications to
each job, and the applicants were all recent college graduates. For
simplicity, we focus on gender in this experiment, as AGCV used a
three-category definition of race that is more complicated to analyze
and may be less generalizable to other settings. In their data, women
are 3.4 percentage points more likely to receive callbacks than men.
While the Arceo-Gomez and Campos-Vasquez (2014) experiment looks at
a very different context than Bertrand and Mullainathan (2004) and
Nunley et al. (2015), this data set allows us to demonstrate the gains
from doubling the number of applications per job opening.

\section{Are Callbacks Rival?}

We begin by considering tests of the binomial trials assumption (Assumption
1) that undergirds our econometric framework. Fundamentally, we are
concerned that the probability of application $\ell$ receiving a
callback from job $j$ might depend not only on its own characteristics
but the characteristics of the other applications sent to it. To assess
this possibility, we fit linear probability models of the form:
\begin{equation}
Y_{j\ell}=\lambda_{0}+X_{j\ell}'\lambda_{1}+\bar{X}_{j\ell}'\lambda_{2}+\varepsilon_{j\ell},\label{eq:peer}
\end{equation}
where $X_{j\ell}$ is a vector of application characteristics and
$\bar{X}_{j\ell}=\left(L-1\right)^{-1}\sum_{k\neq\ell}X_{jk}$ gives
the ``leave out'' mean of those characteristics among the applications
sent to job $j$ excluding application $\ell$. While the coefficient
vector $\lambda_{1}$ gives the direct effect of application characteristics
on callbacks, the coefficient vector $\lambda_{2}$ captures the ``peer
effect'' of other applications to the same job on application $\ell$'s
callback propensity. Assumption \ref{ass:1} restricts these peer
effects to be zero ($\lambda_{2}=0$).

For OLS estimates of (\ref{eq:peer}) to identify a causal effect
of $\bar{X}_{j\ell}$, we need for $\bar{X}_{j\ell}$ to be uncorrelated
with any omitted application characteristics $Z_{j\ell}$ that influence
callbacks. In Bertrand and Mullainathan (2004)'s study, this condition
is violated because of two features of the design. First, application
characteristics were assigned according to their joint distribution
in a training sample, making it likely that $X_{j\ell}$ and $Z_{j\ell}$
are correlated. Second, the application characteristics were chosen
to yield a good match with the job (see pp. 996), leading $Z_{j\ell}$
to be correlated with its leave out mean $\bar{Z}_{j\ell}$ and hence
with $\bar{X}_{j\ell}$. For this reason, we focus on the Nunley et
al. (2015) study which assigned both race and application characteristics
independently of each other and across applications. Note that even
when this independence holds, $X_{j\ell}$ and $\bar{X}_{j\ell}$
will tend to be negatively correlated when $L$ is small \citep{angrist_peers}.
Omitting $X_{j\ell}$ from (\ref{eq:peer}) would therefore tend to
lead to a spurious finding of negative peer effects. So long as $X_{j\ell}$
is included, however, a finding that $\lambda_{2}\neq0$ provides
evidence of a peer effect.

Table II reports estimates of the parameters in (\ref{eq:peer})
for the Nunley et al. (2015) study, with each row showing the coefficients
from a separate regression. While applications with distinctively
black names are significantly less likely to be called back, we find
no significant effect on callback probabilities of changing the racial
mix of the other 3 applications to the same job. In fact, the point
estimate indicates that increasing the share of applicants with distinctively
black names insignificantly lowers callback rates, which is the opposite
of what one would expect if callbacks were rival. Across the 12 covariates
we consider only one (an indicator for 3+ months of unemployment)
finds a significant peer effect at conventional levels. However, a
joint test fails to reject that all of the leave out mean coefficients
are jointly zero.

As another composite test, we report the results of a model in which
the peer effects are restricted to be proportional to the main effects
of the application's own characteristics $X_{j\ell}$. The row titled
``predicted callback rate'' pools all the application characteristics
into an index $X_{j\ell}\hat{\lambda}_{1\left(j\right)}$ where $\hat{\lambda}_{1\left(j\right)}$
is the leave out OLS coefficient vector obtained from regressing the
callback indicator on application covariates after leaving out all
applications to job $j$. In the Nunley et al. (2015) study, a unit
increase in $X_{j\ell}\hat{\lambda}_{1\left(j\right)}$ is associated
with roughly half of a callback on average. Note that if we had used
the leave-in OLS prediction $X_{j\ell}\hat{\lambda}_{1}$ as the regressor,
this coefficient would mechanically equal one. Though $X_{j\ell}\hat{\lambda}_{1\left(j\right)}$
strongly predicts callbacks, its average value among competing applications
$\left(L-1\right)^{-1}\sum_{k\neq\ell}X_{jk}\hat{\lambda}_{1\left(j\right)}$
has no statistically discernible impact on callbacks.

Columns 3 and 4 of Table II report corresponding estimates for the
AGCV data. Unfortunately, the covariates in this dataset have insignificant
direct effects, so tests of indirect effects are fundamentally under-powered.
Empirically then Assumption 1 seems to provide a good approximation
to callback behavior in the Nunley et al. (2015) experiment. A possible
explanation for this result may be that these researchers applied
to posted vacancies where employers were capable of hiring multiple
applicants to the same job. Whatever the reason, we now proceed to
estimating the moments of the callback distribution based on Assumption
1.

\section{Moment Estimates\label{sec:Moment-Estimates}}

Tables III-V report estimates of identified moments of the callback
distribution in each of our three experiments. We report method of
moments estimates that apply equation (\ref{eq:inverse}) to the sample
callback frequencies as well as ``shape constrained'' GMM estimates
that require the callback frequencies be rationalizable by a proper
(discretized) probability distribution $G\in\mathscr{G}$. Imposing
shape constraints serves two goals. First, we need the moment estimates
to be rationalizable by some $G\in\mathscr{G}$ in order to subsequently
use them as constraints when estimating bounds via our linear programming
method. Second, when the constraints bind, the resulting estimates
are typically closer to the truth and more precise (see \citealp{chetverikov2018econometrics}
for a review). Details of the shape constrained estimation procedure
appear in Appendix \ref{sec:shapeconstrained_appx}. Table VI uses
the shape constrained estimates to summarize key features of the distribution
of callback probabilities in each experiment.

\subsection*{\citet{bertrand_mullainathan_2004}}

Estimates of centered moments in the Bertrand and Mullainathan experiment
are shown in Table III. Because this study employed a single design
with $L_{w}=L_{b}=2$, the moments reported are just identified. The
first column of Table III reports method of moments estimates with
standard errors computed via the delta method. The first row of the
Table shows the mean callback probabilities of white and black applications
across jobs which, because of the balanced application design, match
the callback rates reported in Table I. More interesting are the second
moments: there is substantial over-dispersion in callback probabilities,
with standard deviations across jobs for each race-specific probability
more than double the mean probability. As expected, there is also
a strong positive covariance between white and black callback rates,
reflecting that some employers simply call back more applications
of all types.

As shown in column 2, the shape constraints do not bind in the Bertrand
and Mullainathan data, which means the sample frequencies can be rationalized
to numerical precision by a discretized probability distribution.
Consequently, the resulting moment estimates are identical to the
method of moments estimates of column 1. Though the constraints do
not bind, it is hypothetically possible for the variability of the
constrained estimator to be lower if some of the constraints are near-binding.
However, our standard error estimates, which rely on the ``numerical
bootstrap'' procedure of \citet{hong2017numerical} (described in
Appendix \ref{sec:shapeconstrained_appx}), suggest that the constrained
GMM estimator is roughly as variable as the unconstrained method of
moments estimator.

Columns 1-3 of Table VI report transformations of the moments in Table
III that are somewhat easier to interpret. Most notably, we find substantial
heterogeneity in the difference in race specific callback rates $p_{jb}-p_{jw}$
across jobs. The standard deviation of the job-specific causal effect
is more than twice as large as the mean effect. The third row of Table
VI shows the correlation between white and black callback probabilities
is very large, at 0.927. However, the correlation between the discriminatory
gap in callback rates $p_{jb}-p_{jw}$ and the white callback probability
$p_{jw}$ is strongly negative, suggesting that discrimination tends
to be stronger when firms have higher chances of calling back more
white workers. This reflects, in part, a mechanical boundary effect,
as an employer with very low callback rates has little opportunity
to discriminate. Since the white callback rate in this study is only
around 10\%, boundary effects are likely to be a quantitatively important
phenomenon.

\subsection*{\citet{nunley_etal_2015}}

Moment estimates from the \citet{nunley_etal_2015} study are reported
in Table IV. Recall that \citet{nunley_etal_2015} employed five distinct
application designs with $\left(L_{jw},L_{jb}\right)\in\{\left(4,0\right),\left(3,1\right),\left(2,2\right),\left(1,3\right),\left(0,4\right)\}$.
Columns 1-3 of Table IV report design-specific method of moments estimates
of all identified moments for the three designs with the largest sample
sizes.\footnote{The remaining designs were omitted from this analysis due to small
sample sizes. Only 22 jobs were in the $\left(L_{jw}=0,L_{jb}=4\right)$
design while 43 jobs fell in the $\left(L_{jw}=4,L_{jb}=0\right)$
design.} As expected, the design-specific estimates are generally close to
one another. The sole moment that appears to differ across designs
is the mean white callback rate, which is somewhat lower in the $\left(1,3\right)$
design than the $\left(3,1\right)$ design.  However, column 5 of
Table IV shows that the differences between the designs are not jointly
statistically significant, which is in line with our findings from
the previous section.

To pool the designs efficiently, we again use a shape constrained
GMM estimator that requires the moments be rationalizable by a proper
probability distribution $G\in\mathscr{G}$. The pooled estimates,
reported in column 5 of Table IV, lie closest to those from the $\left(2,2\right)$
design, which has the largest sample size. Moments identified solely
by the $\left(1,3\right)$ and $\left(3,1\right)$ designs change
more substantially when pooling across designs, as the binomial structure
of the probabilities imposes restrictions across moments. As usual,
the minimized value of our GMM criterion function provides a measure
of the goodness of fit of our model. Applying the bootstrap method
of \citet{chernozhukov2015constrained} yields a \emph{p}-value of
0.19 for the null hypothesis that the results for all experimental
designs are jointly rationalized by the model. As expected, pooling
the designs substantially improves the precision of the estimated
coefficients. Notably, the standard error estimates fall substantially
even for many just-identified moments due to the cross-moment restrictions
implied by the model.

Consistent with our findings for the Bertrand and Mullainathan data,
columns 4-6 of Table VI reveal substantial heterogeneity in race-specific
callback rates in the Nunley et al. (2015) experiment, with standard
deviations roughly twice their mean. The imbalanced design used by
Nunley et al. (2015) allows us to identify higher moments than the
earlier Bertrand and Mullainathan study despite the total number of
applications sent being the same. While race-specific callback rates
are right skewed, racial gaps in callback probabilities $p_{jb}-p_{jw}$
are left-skewed, suggesting a long tail of heavy discriminators.

\subsection*{\citet{AGCV}}

Column 1 of Table V reports just-identified method of moments estimates
for the AGCV data. Column 2 imposes shape constraints on the moments,
which bind strongly in this case, presumably because the design of
the AGCV experiment involves many small cells. Despite substantial
movement in the moment estimates, the bootstrap \emph{p}-value on
the null hypothesis that the callback frequencies are generated by
the model is 0.79, indicating that the raw callback frequencies are
rationalizable by a well-behaved underlying joint distribution of
callback probabilities. Moreover, imposing the shape constraints reduces
the estimated standard errors of some of these moments. It is important
to note, however, that the asymptotic distribution of the shape constrained
estimator will tend to be non-normal \citep{fang2018inference} and
so standard errors provide only a heuristic guide to the uncertainty
associated with each moment estimate.

Columns 7-9 of Table VI report key moment estimates from the AGCV
data. The behavior of the first two moments is similar to that reported
in the prior two experiments, with gender-specific standard deviations
roughly twice their mean callback probabilities. However, the greater
number of applications used in this design helps enormously with the
precision of higher moment estimates.\footnote{Though the standard errors reported in Table VI suggest imprecision
in our estimates of the higher moments of the female callback rate
distribution, this appears to be a consequence of the asymptotic non-normality
of the shape-constrained estimator. For example, the numerical bootstrap
gives a 90-percent confidence interval of {[}5.37, 7.49{]} for the
excess kurtosis of $p_{jf}$ while the corresponding standard error
equals 8.79.} We find strong evidence of left-skew in the distribution of gender
gaps in callback probabilities as well as evidence of excess kurtosis
in the distribution of gaps. While many jobs discriminate little,
there is a thick tail of heavy discriminators.

\section{Posterior Bounds}

Our analysis of moments revealed substantial heterogeneity in callback
probabilities and discrimination across employers. Next, we compute
lower bound estimates of the probability that a given employer is
discriminating. In computing both the analytic bounds of Lemma \ref{lem:analytical_bound}
and the sharp bounds of (\ref{eq:LP}), we replace the unknown callback
probabilities $\bar{f}$ with estimates $\hat{\bar{f}}=B\hat{\mu}$,
where $\hat{\mu}$ is the relevant vector of shape constrained moment
estimates reported in Tables III-V. To ensure our bounds are not artificially
tight, our linear programming algorithm employs a grid with 36 times
as many points as the grid used in our earlier GMM step.

\subsection*{\citet{bertrand_mullainathan_2004}}

Table VII reports upper bounds on the fraction of jobs that are not
engaged in discrimination by the number of applications called back
in the Bertrand and Mullainathan experiment. Column 1 of Table VII
reports estimates of the analytic bounds in Lemma \ref{lem:analytical_bound}:
at most 62\% of the jobs that call back 2 applications are innocent
of discrimination, while at most 56\% of jobs that call back 3 applications
are not discriminating. Column 2 of Table VII reports estimates of
the sharp linear programming bounds. The sharp upper bounds are somewhat
lower than their analytical counterparts, revealing that at most 56\%
of the jobs calling back two applicants are not discriminating. Among
jobs that call back three applications, at most half are not discriminating
on the basis of race. In this callback stratum, our estimates suggest
jobs should not logically be presumed innocent.

The linear programming approach also generates informative bounds
in callback strata for which analytical bounds are not available.
Overall, at most 87\% of jobs do not discriminate on the basis of
race. Notably, at most 96\% of jobs that call back no applications
are not engaged in discrimination, while at most 79\% of jobs that
call back all four applications do not discriminate on the basis of
race. Since neither of these strata exhibit any difference in black-white
callback rates, all of the relevant information on discrimination
in these strata comes from the total number of callbacks blended with
the indirect evidence from the callback distribution $G\left(\cdot,\cdot\right)$.

Column 3 of Table VII reports linear programming-based upper bounds
on the proportion of jobs with white callback probabilities greater
than or equal to their black callback probability, $\Pr(p_{wj}\geq p_{bj})$.
We find an upper bound of exactly one in each callback stratum, indicating
that the callback probabilities can be rationalized without any employers
engaging in ``reverse discrimination'' against whites. Column 4
of Table VII reports upper bounds on the proportion of jobs with white
callback probabilities less than or equal to their black callback
probabilities, $\Pr(p_{wj}\leq p_{bj})$. These upper bound estimates
coincide exactly with those reported in column 2. Accordingly, we
easily reject the null hypothesis of no discrimination against blacks.

Figure I converts the upper bound estimates in column 2 of Table VII
to lower bound posterior probabilities of discrimination. Overall,
at least 13\% of jobs engage in discrimination. However, at least
72\% of jobs that call back two white and no black applications are
discriminating, while a job that calls back one white and no black
applications has at least a 58\% chance of discriminating. Highlighting
the role of indirect evidence, we estimate that at least 4\% of jobs
that call back no applicants and at least 21\% of jobs that call back
all applicants discriminate on the basis of race.

\subsection*{\citet{nunley_etal_2015}}

Table VIII reports upper bound estimates of the probability of innocence
from the \citet{nunley_etal_2015} study for each application design
involving both races. In column 1, our analytic bound formula suggests
at most 72\% of the jobs calling back two applicants in a balanced
design with $L_{jw}=L_{jb}=2$ are not discriminating -- slightly
higher than the corresponding estimate in Bertrand and Mullainathan.
This upper bound is higher in the two imbalanced designs $\left(L_{jw}=3,L_{jb}=1\right)$
and $\left(L_{jw}=1,L_{jb}=3\right)$.

Applying the linear programming  approach tightens these bounds dramatically
and provides additional bounds on the prevalence of discrimination
among jobs that make no callbacks or that call every application.
We estimate that at most 64\% of all jobs have equal white and black
callback probabilities, with that share falling to under 31\% among
employers who call back two applicants in a balanced $\left(2,2\right)$
design. However, some of this discrimination is estimated to be against
whites. Column 3 shows that our shape constrained callback probabilities
$\hat{\bar{f}}$ imply that at most 85\% of employers have white callback
probabilities greater than or equal to black probabilities. However,
these moments are estimated with error, and a bootstrap test of the
null hypothesis that all employers have white callback probabilities
weakly exceeding their black callback probability yields a \emph{p}-value
of 0.12. If we attribute the evidence of reverse discrimination to
sampling error, we can take the estimates in column 3 as the relevant
upper bounds on non-discrimination, which are closer to the analytical
bounds reported in column 1. Column 4 of Table VIII reports that at
most 83\% of jobs have white callback probabilities less than or equal
to black probabilities. Unsurprisingly, we decisively reject the null
hypothesis that this upper bound is one, indicating that discrimination
against blacks is substantial.

Figure II converts the upper bound priors reported in column 3 of
Table VIII into posterior estimates of the share of employers with
selected callback configurations engaged in discrimination against
blacks. Overall, at least 15\% of jobs discriminate against blacks
(i.e., have $p_{jw}>p_{jb}$). However, we estimate that at least
85\% of the employers calling back two white and no black applicants
in a balanced $\left(2,2\right)$ design are discriminating against
blacks. Interestingly, calling back three whites and no blacks in
a $(3,1)$ design is estimated to be even more suspicious, with at
least 90\% of the employers generating this callback evidence engaged
in discrimination against blacks.

\subsection*{\citet{AGCV}}

Table IX reports upper bound estimates of the probability of innocence
in the AGCV experiment. Focusing on the sharp bounds reported in column
2, we find that at most 72\% of jobs are not engaged in discrimination
against either gender. Remarkably, this share falls to 11\% among
jobs calling back a single applicant and rises to only 28\% among
jobs calling back two applicants. This bound is much lower than the
analytic bound in column 1, showing that cross-stratum restrictions
in a design with eight applications are very useful for tightening
bounds in strata with few callbacks. Evidently, jobs that call back
few applicants in the AGCV experiment are very likely to engage in
discrimination.

Some of this discrimination appears to be ``reverse'' discrimination
against women. Column 3 shows that at most 91\% of jobs do not discriminate
against women and a bootstrap test of the null hypothesis that this
bound equals one is decisively rejected. An employer that calls back
a single application has at most a 59\% chance of not discriminating
against women. Column 4 shows that at most 81\% of jobs do not discriminate
against men, and our bootstrap \emph{p}-value indicates this bound
is also statistically distinguishable from one. The mean difference
in callback rates in the ACGV experiment therefore masks gender discrimination
operating in both directions. An employer that calls back a single
application has at most a 52\% chance of not discriminating against
men.

Figure III plots lower bound posterior probabilities of discrimination
against men and women, respectively, for selected callback configurations.
Unconditionally, at least 20\% of jobs discriminate against men (i.e.,
have $p_{jm}<p_{jf}$), while at least 10\% of jobs discriminate against
women (i.e., have $p_{jf}<p_{jm}$). At least 97\% of the jobs that
call back four women and no men are estimated to discriminate against
men. But even an employer that calls back a single woman and no men
has at least a 90\% chance of discriminating against men. Likewise,
at least 85\% of jobs that call back a single man and no women are
estimated to be discriminating against women. Note that the under
null of non-discrimination, the probability of a particular gender
being contacted given a single callback in total is $\bar{f}_{1}^{0}\left(1,0\right)=\bar{f}_{1}^{0}\left(0,1\right)=1/2$.
That we obtain such strikingly informative posteriors in settings
with a single callback demonstrates the tremendous value of indirect
evidence in this setting.

\section{Parametric Models}

The previous section demonstrated that standard audit experiments
allow robust non-parametric inferences to be drawn about the discriminatory
status of particular jobs. In this section, we contrast the non-parametric
bounding methods developed above with the results of considering a
simple  parametric family $\mathbb{G}_{\theta}$ of distributions
for $G\left(\cdot,\cdot\right)$. We estimate the parameter vector
$\theta$ by maximum likelihood. If the true $G\left(\cdot,\cdot\right)$
lies in $\mathbb{G_{\theta}}$ then this approach will yield consistent
and efficient estimates of $\theta$, while if the model is misspecified,
maximum likelihood will still provide an approximation to whatever
features of $G\left(\cdot,\cdot\right)$ are identified. Parametric
modeling also facilitates incorporating other application characteristics
into the callback probabilities. This can serve to generate more nuanced
posteriors; for example, an employer that calls back both of two low
quality white applications but neither of two high quality black applications
is particularly suspicious.

We work with a mixed logit model of the form
\[
\Pr\left(Y_{j\ell}=1|R_{j\ell},X_{j\ell},\alpha_{j},\beta_{j}\right)=\Lambda\left(\alpha_{j}-\beta_{j}1\left\{ R_{j\ell}=b\right\} +X_{j\ell}'\psi\right),
\]
where $\Lambda\left(\cdot\right)=\frac{\exp\left(\cdot\right)}{1+\exp\left(\cdot\right)}$
is the standard logistic CDF, $X_{j\ell}$ is a vector of de-meaned
application covariates, and $\left(\alpha_{j},\beta_{j}\right)$ are
random coefficients governing the odds of a white callback and discrimination
against blacks respectively. To allow for heterogeneity in white callback
rates we assume that $\alpha_{j}\overset{iid}{\sim}N\left(\alpha_{0},\sigma_{\alpha}^{2}\right)$.
Discrimination is modeled as a two-type (conditional) mixture:

\[
\beta_{j}|\alpha_{j}=\begin{cases}
\beta_{0} & \text{w/ prob. \ensuremath{\Lambda\left(\tau_{0}+\tau_{\alpha}\alpha_{j}\right)}},\\
0 & \text{w/ prob. \ensuremath{1-\Lambda\left(\tau_{0}+\tau_{\alpha}\alpha_{j}\right)}}.
\end{cases}
\]

The above specification allows for some fraction of jobs to not discriminate
at all, while the remaining jobs depress the odds of calling back
blacks relative to whites by roughly $\beta_{0}\%$. When $\tau_{\alpha}\neq0$,
the probability of a job discriminating depends on $\alpha_{j}$,
which governs the white callback rate. Note that random assignment
of the covariates $X_{j\ell}$ implies they can safely be excluded
from the type probability equation.

\subsection*{Model Estimates}

Table X shows the results of fitting the above model to the \citet{nunley_etal_2015}
experiment. Column 1 provides a standard ``random effects'' logit
model with heterogeneity confined to the intercept as in \citet{farber_etal_audit}.
We find substantial variability across jobs in the overall odds of
a callback: a 0.1 standard deviation increase in the intercept $\alpha_{j}$
is estimated to raise the odds of a callback by 47\%. We also find
clear evidence of market-wide discrimination: black applications have
roughly 46\% lower odds of being called back than their white counterparts.

Column 2 allows the race effect $\beta_{j}$ to vary across employers,
which yields a significant improvement in model fit. The types specification
finds that only about 17\% of jobs discriminate against blacks --
very near the lower bound estimate of 15\% produced by our linear
programming routine (see column 3 of Table VIII). However, the degree
of discrimination among such jobs is estimated to be very severe --
the odds of receiving a call back are roughly $\exp\left(4\right)-1\approx53$
times higher for white applications than blacks. Column 3 allows the
probability of being the discriminatory type to depend on the white
callback rate, which yields a negligible improvement in model fit.
Surprisingly, $\alpha_{j}$ and $\beta_{j}$ are found to be nearly
independent, which implies that the negative correlation between $p_{jb}-p_{jw}$
and $p_{jw}$ reported in Table VI is attributable to boundary effects.
Again, this model finds roughly 17\% of jobs discriminate against
blacks.

Because we cannot reject the null hypothesis that $\tau_{\alpha}=0$
(i.e. that discrimination is independent of white callback rates),
we work with the more parsimonious model in column 2 in the exercises
that follow. Figure IV provides a goodness of fit diagnostic for this
model, plotting the empirical callback rates in each black / white
callback by application design cell against the logit model's predicted
callback probability in that cell. The empirical frequencies track
the model predictions closely and a naive Pearson $\chi^{2}$ test
fails to reject the null hypothesis that the model rationalizes the
cell frequencies up to sampling error.

\subsection*{Posteriors}

Figure V reports the distribution of posterior probabilities $\Pr(D_{j}=1|\{Y_{j\ell},R_{j\ell},X_{j\ell}'\psi\}_{\ell=1}^{L})$
implied by the parameter estimates reported in column 2 of Table X.
To summarize the influence of the covariates, we evaluate the posteriors
at two points within each race group, corresponding to the estimated
index $X_{j\ell}'\hat{\psi}$ being a standard deviation above or
below its empirical mean, which we refer to as ``high'' and ``low''
quality applications respectively. By construction, the mean posterior
coincides with the fraction of jobs that are estimated to be discriminating.
The types model finds that only 17\% of jobs are discriminating, yielding
a strong prior of innocence. Calling back only white applicants still
justifies a substantial degree of suspicion, however: 62\% of the
jobs that call back two whites and no blacks are discriminating. Imbalances
in the covariate mix of applicants can substantially intensify this
suspicion. Specifically, 79\% of the jobs that call back two low quality
white applications and neither of two high quality black applications
are discriminating. Evidently, even in models with a strong presumption
of innocence, four applications can provide enough information to
cast substantial doubt on whether individual employers are in compliance
with US employment law. However, it is only under the most extreme
callback configurations that we can detect discriminators with reasonable
certainty. In the next section, we study more carefully the tradeoff
between type I and II errors presented by the two-type model, and
how that tradeoff evolves with the number of applications sent.

\section{Experimental Design and Detection Error Tradeoffs\label{sec:DET}}

We now study the ability of a hypothetical auditor to detect employer
discrimination in a hypothetical population characterized by the two-type
logit model of callback rates reported in column 2 of Table X. This
parameterization found that discrimination was very rare, with only
17\% of jobs engaging in discrimination. It is of considerable interest
then to understand how the tradeoff between type I (``false accusation'')
and type II (``false acquittal'') errors faced by an auditor changes
with the number of applications sent to each job.

Recall from Lemma \ref{lem:lem1} that an auditor's optimal decision
rule is to investigate when the posterior probability of discrimination
crosses a cost-based threshold. We presume the auditor forms posteriors
using her prior knowledge of $G\left(\cdot,\cdot\right)$ which coincides
with the two-type estimates reported in Table X. This corresponds
to a thought experiment in which the auditor fits the two-type model
to the Nunley et al. (2015) experiment, forms consistent estimates
of the logit parameters, and then sends applications to additional
vacancies posted on the same online job board from which the original
study sampled.

Figure VI displays a rescaling of the type I and II error rates that
arise from implementing decision rules corresponding to varying posterior
thresholds. The horizontal axis of Figure I gives the share of jobs
engaged in discriminating that are investigated (``accused''). The
vertical axis plots the share of non-discriminators that are not investigated
(i.e., that are ``acquitted''). Note that this quantity corresponds
to $1-(FDR_{1}/\pi^{0})$ (see Remark \ref{rem:FDR}). Each point
gives the values of these shares corresponding to a particular posterior
decision threshold. The bold point corresponds to a posterior threshold
of 80\%.

In the canonical design with only 4 applications (2 white and 2 black),
the 80\% posterior threshold yields almost no false accusations. This
control over type I errors comes at the cost of a very high type II
error rate -- very few accusations of any sort are made, leading
to a negligible fraction of discriminators being detected. Note that
conducting a classical hypothesis test (e.g., Fisher's exact test)
at the 1\% level is equivalent to controlling the fraction of correct
acquittals, which is depicted by the horizontal line at 0.99. This
rule would yield more accusations but most of those accusations would
be erroneous -- the equivalent posterior threshold in the 2 pair
design is only about 33\%.

Expanding the design to 5 pairs of applications (5 white and 5 black)
yields a very substantial outward shift in the detection error tradeoff
curve. Using a posterior threshold of 80\% keeps the fraction of employers
falsely accused of discrimination below 0.2\% while allowing detection
of roughly 7.5\% of the jobs that are actually discriminating. Evidently,
ten applications is enough to accurately detect a non-trivial fraction
of discriminators.

The third line probes the potential for further gains by choosing
the racial mix and covariates of the applications optimally. Specifically,
we consider the set of 10-application portfolios generated by all
possible combinations of race and two covariate-based application
``quality'' bins, and select the portfolio that minimizes risk for
a given posterior threshold. We then vary that threshold to trace
out the detection error tradeoff. Choosing applications optimally
yields modest improvements in type I and II error rates. Using an
80\% posterior threshold, the share of non-discriminators investigated
remains below 0.2\%, while the share of discriminators investigated
rises to roughly 10\%.

Interestingly, the risk minimizing portfolio for an auditor with an
80\% posterior threshold consists of 5 high quality black applications
and 5 low quality white applications. Intuitively, an employer that
calls back low quality white applications more often than high quality
black applications is very likely to be discriminating against blacks.
Of course, the results of such an experiment would be difficult to
interpret without prior knowledge of $G\left(\cdot,\cdot\right)$,
as one would not be able to parse the separate effects of race and
quality. Consequently, the gains from optimizing the portfolio of
applications are in practice only achievable in a sequential experiment
in which a first wave is used to estimate $G\left(\cdot,\cdot\right)$.

\section{Ambiguity and Auditing Thresholds}

The analysis of the previous section suggested that a favorable mix
of type I and type II errors can be achieved when 10 applications
are sent to each job and jobs with a posterior probability of discriminating
greater than 80\% are investigated. However, that analysis assumed
that $G\left(\cdot,\cdot\right)$ was characterized by the parameters
of our two-type logit model, which provides only one of many possible
rationalizations of the moments identified by the \citet{nunley_etal_2015}
experiment. To assess how our hypothetical auditor's risk might change
under different distributional assumptions, and how best to respond
to this ambiguity, we now study the maximum risk function $\mathcal{R}_{J}^{m}(\delta)$.
In this exercise, we assume each job $j$ is characterized by a tuple
$\left(p_{jw}^{H},p_{jw}^{L},p_{jb}^{H},p_{jb}^{L}\right)$ of race
by quality callback probabilities drawn from the joint distribution
$G\left(p_{w}^{H},p_{w}^{L},p_{b}^{H},p_{b}^{L}\right)$. For comparison
with the logit model, which only allowed discrimination against blacks,
we define discrimination as occuring when callback probabilities are
\textit{higher} for whites within either quality stratum; that is,
we define $D_{j}=1-1\{p_{jb}^{H}\geq p_{jw}^{H}\}1\{p_{jb}^{L}\geq p_{jw}^{L}\}$.\footnote{Defining discrimination more narrowly as any difference between white
and black callback probabilities within either quality stratum yields
nearly identical results.}

To facilitate comparison with the logit model, we consider a restricted
family $\mathcal{D}^{\dagger}$ of decision rules  of the form $\delta\left(C_{j},q\right)=1\left\{ \mathcal{P}\left(C_{j},X'\psi,G_{logit}\right)\geq q\right\} ,$
where $q\in(0,1)$ is a cutoff and $G_{logit}$ is the logit model
reported in column 2 of Table X.\footnote{In cases where multiple evidence configurations yield posterior $q$,
we consider separate rules that investigate each of these configurations
individually.} Computing the maximal risk for this family of decision rules can
be thought of as a way of ``second guessing'' the risk associated
with each logit posterior threshold without debating the logit model's
ordering of the underlying evidence configurations. In computing $\mathcal{R}_{J}^{m}(\delta)$,
we use the logit predictions of $\bar{f}\left(c_{w},c_{b}\right)$
within each of the two quality bins of $X'\hat{\psi}$ as constraints
(see Appendix \ref{sec:maxrisk_appx} for details) and choose loss
parameters $\kappa=4$ and $\gamma=1$ so that, under the logit DGP,
an 80\% posterior threshold would minimize risk.

Figure VII plots average risk $\mathcal{R}_{J}^{m}\left(\delta\left(\cdot,q\right)\right)/J$
against the nominal logit posterior threshold $q$. For each decision
rule, the maximal risk is much higher than the average logit risk,
with the ratio between the two risks growing (discontinuously in some
cases) with the posterior cutoff. As the posterior threshold approaches
one -- so that no jobs are accused -- the maximal risk approaches
one because the least favorable $G\left(\cdot,\cdot\right)$ entails
every job being guilty. Conversely as the posterior threshold approaches
zero -- so that all jobs are accused -- the maximum risk approaches
four because the least favorable $G\left(\cdot,\cdot\right)$ is one
where nearly all jobs are innocent. Recall however from Table VIII
that not all jobs can be innocent in the \citet{nunley_etal_2015}
experiment, which is why $\sup_{\delta\in\mathcal{D}^{\dagger}}\mathcal{R}_{J}^{m}\left(\delta\right)/J$
is a value less than four.

While the logit risk function $\mathcal{R}_{J}\left(G_{logit},\delta\right)$
is minimized by the decision rule with a threshold nearest 80\%, $\mathcal{R}_{J}^{m}\left(\delta\right)$
is minimized by a rule with an implicit (logit-based) threshold of
only 18\%. This lower threshold implies a minimax auditor would investigate
many more jobs than an auditor with the same preferences who knows
$G\left(\cdot,\cdot\right)$ to be logit. Evidently, the minimax auditor
is more concerned with the possibility that she is passing over a
vast number of jobs engaged in modest amounts of discrimination than
that a few non-discriminators are improperly investigated. To gain
some intuition for this result, note that the minimax decision rule
occurs at a threshold where the fraction of jobs that are engaged
in discrimination more than triples. If the worst case DGP is one
where most jobs are guilty, it makes sense to accuse more jobs. The
lesson here is that although mispecification can lead to substantially
higher risk, ambiguity regarding $G\left(\cdot,\cdot\right)$ will
tend to lead to more rather than fewer audits.

\section{Conclusion}

Correspondence studies are powerful tools that have been extensively
used to detect market level averages of discriminatory behavior. Revisiting
three such studies, we find tremendous heterogeneity across employers
in their degree of discriminatory behavior. This heterogeneity presents
authorities charged with enforcing anti-discrimination laws with a
difficult inferential task. Our analysis suggests that when ensemble
evidence is used, 10 applications per employer is enough to accurately
detect a non-trivial share of discriminatory employers. This finding
opens the possibility that discrimination can be monitored -- perhaps
in real time -- at the employer level.

Our results also provide a number of methodological lessons regarding
the design and analysis of correspondence studies, and of experimental
ensembles more generally. First, we demonstrate that indirect evidence
can serve as a valuable supplement to direct evidence when making
inferences regarding the behavioral responses of particular experimental
units. Our logit results, in particular, suggest that accurately monitoring
illegal discrimination in online labor markets is feasible with relatively
small modifications to conventional audit designs once knowledge of
the callback distribution $G\left(\cdot,\cdot\right)$ has been obtained.
Whether such knowledge is better obtained through sequential experimentation
\citep[e.g.,][]{chakraborty2014dynamic,dimakopoulou2017estimation,narita2018efficient}
or static empirical Bayes methods of the sort considered in this paper
is an interesting question for future work.

Second, our analysis demonstrates that partial identification of the
population distribution of response heterogeneity does not preclude
``borrowing strength'' from experimental ensembles. Using only a
few moments of the callback distribution, we are able to derive informative
lower bounds on the fraction of jobs engaging in illegal discrimination.
These bounds are shown to allow precise inferences to be drawn about
some jobs even in standard designs with only four applications per
job. In the \citet{AGCV} study, which sent eight applications to
each job, we are able to deduce informative lower bound rates of discrimination
against men and women separately.

Third, our results highlight that the appropriate use of indirect
evidence depends critically on the objectives of the investigator,
formalized in our framework by the loss function of a hypothetical
auditor. While in point identified settings it is straightforward
to characterize the tradeoff between type I and II errors implied
by different decision rules, partial identification of heterogeneity
distributions tends to undermine identifiability of this tradeoff
itself, an issue emphasized by \citet{manski2000identification}. In
our setting acknowledging the ambiguity stemming from partial identification
turns out to lead to ``bolder'' inferences, but it is easy to envision
settings where the opposite would be true. An interesting topic for
future research is the extent to which the policy implications of
recent econometric evaluations of teachers, schools, hospitals, and
neighborhoods \citep[e.g.,][]{chetty/friedman/rockoff:14b,ahpw_vam,hull_jmp,chettyhendren_neighborhoods_2,chetty_opportunity_atlas}
vary with alternative notions of risk.

\pagebreak{}

\bibliographystyle{ecta}
\bibliography{cw}

\pagebreak{}

\appendix
\titleformat{\section}{\normalfont\Large\bfseries}{Appendix \thesection:}{1em}{}

\section{Proof That $pFDR_{J}=\Pr\left(D_{j}=0|\delta(C_{j})=1\right)$\label{sec:pFDR_appx}}

By iterated expectations
\begin{eqnarray*}
\mathbb{E}\left[N_{J}^{-1}\sum_{j=1}^{J}\delta\left(C_{j}\right)\left(1-D_{j}\right)|N_{J}\geq1\right] & = & \sum_{n=1}^{J}n^{-1}\mathbb{E}\left[\sum_{j=1}^{J}\delta\left(C_{j}\right)\left(1-D_{j}\right)|N_{J}=n\right]\Pr\left(N_{J}=n|N_{J}\geq1\right)\\
 & = & \sum_{n=1}^{J}n^{-1}J\mathbb{E}\left[\delta\left(C_{j}\right)\left(1-D_{j}\right)|N_{J}=n\right]\Pr\left(N_{J}=n|N_{J}\geq1\right)\\
 & = & \sum_{n=1}^{J}n^{-1}J\Pr\left(D_{j}=0|\delta(C_{j})=1,N_{J}=n\right)\Pr\left(N_{J}=n|N_{J}\geq1\right)\\
 &  & \times\Pr\left(\delta(C_{j})=1|N_{J}=n\right)\\
 & = & \sum_{n=1}^{J}\Pr\left(D_{j}=0|\delta(C_{j})=1,N_{J}=n\right)\Pr\left(N_{J}=n|N_{J}\geq1\right)\\
 & = & \Pr\left(D_{j}=0|\delta(C_{j})=1\right)\sum_{n=1}^{J}\Pr\left(N_{J}=n|N_{J}\geq1\right)\\
 & = & \Pr\left(D_{j}=0|\delta(C_{j})=1\right),
\end{eqnarray*}
where the second and fifth lines use that the $\left\{ C_{j},D_{j}\right\} _{j=1}^{J}$
are $iid$ and the fourth uses the fact that $\Pr\left(\delta\left(C_{j}\right)=1|N_{J}=n\right)=n/J$.
Hence, $pFDR_{J}$ gives the probability $\Pr\left(D_{j}=0|\delta(C_{j})=1\right)$
that an investigated job is innocent.

\section{Discretization of $G$ and Linear Programming Bounds\label{sec:LP_appx}}

To compute the solution to the problem in (\ref{eq:LP}), we approximate
the CDF $G\left(p_{w},p_{b}\right)$ with the discrete distribution
\begin{eqnarray*}
G_{K}\left(p_{w},p_{b}\right) & = & \sum_{k=1}^{K}\sum_{l=1}^{K}\pi_{kl}1\left\{ p_{w}\leq\varrho\left(k,l\right),p_{b}\leq\varrho\left(l,k\right)\right\} ,
\end{eqnarray*}
where the $\left\{ \pi_{kl}\right\} _{k=1,l=1}^{K,K}$ are probability
masses and $\left\{ \varrho\left(k,l\right),\varrho\left(l,k\right)\right\} $$_{k=1,l=1}^{K,K}$
comprise a set of mass point coordinates generated by the function
\[
\varrho\left(x,y\right)=\underbrace{\frac{\min\left\{ x,y\right\} -1}{K}}_{\text{diagonal}}+\underbrace{\frac{\max\left\{ 0,x-y\right\} ^{2}}{K\left(1+K-y\right)}}_{\text{off-diagonal}}.
\]
This discretization scheme can be visualized as a two-dimensional
grid containing $K^{2}$ elements. The diagonal entries on the grid
represent jobs where no discrimination is present. The first term
above ensures the mass points are equally spaced along the diagonal
from $\left(0,0\right)$ to $\left(\frac{K-1}{K},\frac{K-1}{K}\right)$.
The second term spaces off diagonal points quadratically according
to their distance from the diagonal in order to accomodate jobs with
very low levels of discrimination while economizing on the number
of grid points. Note that $\underset{K\rightarrow\infty}{\lim}\varrho\left(K,y\right)=1$
ensuring the grid asymptotically spans the unit square.

With this notation, the constraints in (\ref{eq:LPcons}) can be written:
\begin{equation}
\bar{f}\left(c_{w},c_{b}\right)=\left(\begin{array}{c}
L_{w}\\
c_{w}
\end{array}\right)\left(\begin{array}{c}
L_{b}\\
c_{b}
\end{array}\right)\sum_{k=1}^{K}\sum_{l=1}^{K}\pi_{kl}\varrho\left(k,l\right)^{c_{w}}\left(1-\varrho\left(k,l\right)\right)^{L_{w}-c_{w}}\varrho\left(l,k\right)^{c_{b}}\left(1-\varrho\left(l,k\right)\right)^{L_{b}-c_{b}},\label{eq:discrete_cons}
\end{equation}
for $c_{w}=\left(1,...,L_{w}\right)$ and $c_{b}=\left(1,...,L_{b}\right)$.
Hence, our composite discretized optimization problem is to
\[
\max_{\left\{ \pi_{kl}\right\} }\frac{\left(\begin{array}{c}
L\\
t
\end{array}\right)}{\sum_{\left(c_{w}',c_{b}'\right):c_{w}'+c_{b}'=t}\bar{f}\left(c_{w}',c_{b}'\right)}\sum_{l=0}^{K}\sum_{k=0}^{K}\pi_{kl}\varrho\left(k,l\right)^{t}\left(1-\varrho\left(k,l\right)\right)^{L-t},
\]
subject to (\ref{eq:discrete_cons}) and
\[
\sum_{k=1}^{K}\sum_{l=1}^{K}\pi_{kl}=1,\quad\pi_{kl}\geq0,
\]
for $k=1,...,K$ and $l=1,...,K$. We solve this problem numerically
using the Gurobi software package. Because setting $K$ too low will
tend to yield artificially tight bounds, we set $K=900$ in all bound
computation steps, which yields $\left(900\right)^{2}=810,000$ distinct
mass points.

\section{Shape Constrained GMM\label{sec:shapeconstrained_appx}}

To accomodate the \citet{nunley_etal_2015} study which employs multiple
application designs, we introduce the variable $A_{j}=\left(A_{jw},A_{jb}\right)$
which gives the number of white and black applications sent to job
$j$. Collecting the design-specific callback probabilities $\left\{ \Pr\left(C_{jw}=c_{w},C_{jb}=c_{b}|A_{j}=a\right)\right\} _{c_{w},c_{b}}$
into the vector $f_{a}$, our model relates these probabilities to
moments of the callback distribution via the linear system $f_{a}=B_{a}\mu$,
for $B_{a}$ a fixed matrix of binomial coefficients. Letting $f$
denote the vector formed by ``stacking'' the $\left\{ f_{a}\right\} $
across designs in an experiment, we write $f=B\mu.$ Let $\pi$ be
a $K^{2}\times1$ vector comprised of the probability masses $\left\{ \pi_{kl}\right\} _{k=1,l=1}^{K,K}$
(see Appendix \ref{sec:LP_appx}). For GMM estimation we set $K=150$
(larger values yield very similar results). From (\ref{eq:discrete_cons}),
we can write $\mu=M\pi$ where $M$ is a $dim\left(\mu\right)\times K^{2}$
matrix comprised of entries with typical element $\varrho\left(k,l\right)^{m}\left(1-\varrho\left(k,l\right)\right)^{s-m}\varrho\left(l,k\right)^{n}\left(1-\varrho\left(l,k\right)\right)^{t-n}$.
Defining $R=BM$, we have the moment restriction $f=R\pi$.

Let $\tilde{f}$ denote the vector of empirical call back probabilities
with typical element:
\begin{center}
$\frac{J^{-1}\sum_{j=1}^{J}1\left\{ C_{jw}=c_{w},C_{jb}=c_{b},A_{j}=a\right\} }{J^{-1}\sum_{j=1}^{J}1\left\{ A_{j}=a\right\} }$.
\par\end{center}

\noindent \begin{flushleft}
Our shape constrained GMM estimator of $\pi$ can be written as the
solution to the following quadratic programming problem:
\begin{equation}
\hat{\pi}=\arg\inf_{\pi}\;(\tilde{f}-R\pi)'W(\tilde{f}-R\pi)\label{eq:GMM}
\end{equation}
\[
\textrm{ s.t. }\;\pi\geq0,\;\textbf{1}'\pi=1,
\]
where $W$ is a fixed weighting matrix. Note that because $G\left(\cdot,\cdot\right)$
is not identified, there are many possible solutions $\hat{\pi}$
to this problem, but these solutions will all yield the same values
of $R\hat{\pi}$. Our shape constrained estimate of the moments is
$\hat{\mu}=M\hat{\pi}$ while our estimator of the callback probabilities
is $\hat{f}=R\hat{\pi}$. We follow a two-step procedure, solving
(\ref{eq:GMM}) with diagonal weights proportional to the number of
jobs used in the application design and then choosing $W=\hat{\Sigma}^{-1}$
where $\hat{\Sigma}=\text{diag}\left(\hat{f}^{\left(1\right)}\right)-\hat{f}^{\left(1\right)}\hat{f}^{\left(1\right)\prime}$
is an estimate of the variance-covariance matrix of the callback frequencies
implied by the first step shape-constrained callback probability estimates
$\hat{f}^{\left(1\right)}$.
\par\end{flushleft}

\subsection*{\citet{hong2017numerical} standard errors}

Standard errors on the moment estimates $\hat{\mu}$ are computed
via the numerical bootstrap procedure of \citet{hong2017numerical}
using a step size of $J^{-1/4}$ (we found qualitatively similar results
with a step size of $J^{-1/3}$). Our implementation of the numerical
bootstrap proceeds as follows: the bootstrap analogue $\mu^{*}$ of
$\hat{\mu}$ solves the quadratic programming problem in (\ref{eq:GMM})
where $\tilde{f}$ has been replaced by $\left(\tilde{f}+J^{-1/4}f^{*}\right)$.
The bootstrap probabilities $f^{*}$ have typical element:
\[
\tfrac{J^{-1}\sum_{j=1}^{J}\omega_{j}^{*}1\left\{ C_{jw}=c_{w},C_{jb}=c_{b},A_{j}=a\right\} }{J^{-1}\sum_{j=1}^{J}\omega_{j}^{*}1\left\{ A_{j}=a\right\} },
\]
where $\left\{ \omega_{j}^{*}\right\} _{j=1}^{J}$ are a set of iid
draws from an exponential distribution with mean and variance one.
For any function $\phi\left(\hat{\mu}\right)$ of the moment estimates
$\hat{\mu}$ reported, we use as our standard error estimate the standard
deviation across bootstrap replications of $J^{-1/4}\left[\phi\left(\mu^{*}\right)-\phi\left(\hat{\mu}\right)\right]$.

\subsection*{\citet{chernozhukov2015constrained} goodness of fit test}

To formally test whether there exists a $\pi$ in the $K^{2}$ dimensional
probability simplex such that $f=R\pi$ holds, we rely on the procedure
of \citet{chernozhukov2015constrained}. Our test statistic (the ``\emph{J}-test'')
can be written:
\[
T_{n}=\inf_{\pi}\;(\tilde{f}-R\pi)'\hat{\Sigma}^{-1}(\tilde{f}-R\pi)
\]

\[
\textrm{ s.t. }\;\pi\geq0,\;\textbf{1}'\pi=1.
\]

Letting $\mathbb{F}^{*}=f^{*}-\tilde{f}$ denote the (centered) bootstrap
analogue of the callback frequencies $\tilde{f}$ and $W^{*}$ a corresponding
bootstrap weighting matrix, our bootstrap test statistic takes the
form: 
\begin{equation}
T_{n}^{*}=\inf_{\pi,h}\;(\mathbb{F}^{*}-Rh)'W^{*}(\mathbb{F}^{*}-Rh)\label{eq:BS_J}
\end{equation}
\[
\textrm{ s.t. }\;(\tilde{f}-R\pi)'W(\tilde{f}-R\pi)=T_{n},\;\pi\geq0,\;\textbf{1}'\pi=1,\;h\geq-\pi,\;\textbf{1}'h=0
\]
As in the full sample problem, we conduct a two-step GMM procedure
in each bootstrap replication, setting $W^{*}=\left[\textrm{diag}(R\pi^{\left(1\right)*})-(R\pi^{\left(1\right)*})(R\pi^{\left(1\right)*})^{\prime}\right]^{-1}$
where $\pi^{\left(1\right)*}$ is a first-step diagonally weighted
estimate of the probabilities in the bootstrap sample. The goodness
of fit \emph{p}-value reported is the fraction of bootstrap samples
for which $T_{n}^{*}>T_{n}$.

To simplify computation of (\ref{eq:BS_J}), we re-formulate the problem
in two ways. First, we define primary and auxilliary vectors of errors
for each moment condition. Letting $\xi_{h}=\mathbb{F}^{*}-Rh$ and
$\xi_{\pi}=\tilde{f}-R\pi$, the problem can be re-posed as: 
\[
T_{n}^{*}=\inf_{\xi_{h},\xi_{\pi}}\;\xi_{h}'W^{*}\xi_{h},
\]
\[
\textrm{s.t.}\;\;\xi_{\pi}'W\xi_{\pi}=T_{n},\;\;Rh+\xi_{h}=\mathbb{F}^{*},\;\;R\pi+\xi_{\pi}=\tilde{f},\;\;\textbf{1}'h=0,\;\;\textbf{1}'\pi=1,\;\;h\geq-\pi,\;\;\pi\geq0.
\]
Now letting $h^{+}=h+\pi$, we can further rewrite the problem as:
\[
T_{n}^{*}=\inf_{\xi_{h},\xi_{\pi}}\;\xi_{h}'W^{*}\xi_{h},
\]
\[
\textrm{s.t.}\;\;\xi_{\pi}'W\xi_{\pi}=T_{n},\;\;Rh^{+}+\xi_{h}+\xi_{\pi}=\mathbb{F}^{*},\;\;R\pi+\xi_{\pi}=\tilde{f},\;\;\textbf{1}'h^{+}=1,\;\;\textbf{1}'\pi=1,\;\;h^{+}\geq0,\;\;\pi\geq0.
\]
Note that this final representation replaces a $K^{2}\times K^{2}+1$
(inequality) constraint matrix encoding $\xi_{h}\geq-\xi_{\pi}$ and
$\xi_{\pi}\geq0$ with a $2K^{2}\times1$ vector encoding $h^{+}\geq0$
and $\pi\geq0$. Because this problem still involves a quadratic constraint
in $\xi_{\pi}$, we make use of Gurobi's Second Order Cone Programming
(SOCP) solver to obtain a solution.

\section{Computing Maximum Risk\label{sec:maxrisk_appx}}

We approximate $G\left(p_{w}^{H},p_{w}^{L},p_{b}^{H},p_{b}^{L}\right)$
with the discretized distribution
\begin{eqnarray*}
G_{K}\left(p_{w}^{H},p_{w}^{L},p_{b}^{H},p_{b}^{L}\right) & = & \sum_{k=1}^{K}\sum_{l=1}^{K}\sum_{k'=1}^{K}\sum_{l'=1}^{K}\pi_{klk'l'}1\left\{ p_{w}^{H}\leq\varrho\left(k,l\right),p_{w}^{L}\leq\varrho\left(k',l'\right),p_{b}^{H}\leq\varrho\left(l,k\right),p_{b}^{L}\leq\varrho\left(l',k'\right)\right\} ,
\end{eqnarray*}
which has $K^{4}$ mass points. In practice, we choose $K=30$, which
yields the same number of points as the approximation described in
Appendix \ref{sec:LP_appx}.

Generalizing the notation of Appendix \ref{sec:shapeconstrained_appx},
let the vector $A_{j}=\left(A_{jw}^{H},A_{jw}^{L},A_{jb}^{H},A_{jb}^{L}\right)$
record the number of high quality and low quality applications of
each race sent to job $j$ and let $C_{j}=\left(C_{jw}^{H},C_{jw}^{L},C_{jb}^{H},C_{jb}^{L}\right)$
record the corresponding numbers of callbacks. The posterior probability
of discrimination is $\Pr\left(D_{j}=1|A_{j},C_{j}\right)=\mathcal{P}\left(C_{j},A_{j},G\right)$.
The space of auditing rules we consider is of the form $\delta\left(C_{j},A_{j},q\right)=1\left\{ \mathcal{P}\left(C_{j},A_{j},G_{\text{logit}}\right)>q\right\} $.With
this notation, we can write the risk function
\begin{eqnarray*}
\mathcal{R}_{J}(q) & = & \sum_{j=1}^{J}\Pr\left(\delta\left(C_{j},A_{j},q\right)=1,D_{j}=0\right)\kappa+\Pr\left(\delta\left(C_{j},A_{j},q\right)=0,D_{j}=1\right)\gamma\\
 & = & J\times\sum_{a\in\mathscr{A}_{1}}w_{a}\left\{ \Pr\left(\delta\left(C_{j},a,q\right)=1,D_{j}=0\right)\kappa+\Pr\left(\delta\left(C_{j},a,q\right)=0,D_{j}=1\right)\gamma\right\} .
\end{eqnarray*}
where $\mathscr{A}_{1}$ is the set of all $2^{5}=36$ binary quality
permutations possible in a design with 5 white and 5 black applications
and $w_{a}=\left(\begin{array}{c}
5\\
a_{w}^{H}
\end{array}\right)\left(\begin{array}{c}
5\\
a_{b}^{H}
\end{array}\right)\left(\frac{1}{2}\right)^{10}$ is the set of weights that arise when quality is assigned at random
within race.

To further evaluate the above risk expression note that when $D_{j}=1-1\left\{ p_{jb}^{H}\geq p_{jw}^{H}\right\} 1\left\{ p_{jb}^{L}\geq p_{jw}^{L}\right\} $,
we can write:
\begin{eqnarray*}
\Pr\left(\delta\left(C_{j},a,q\right)=1,D_{j}=0\right) & = & \sum_{c_{w}^{H}=0}^{a_{jw}^{H}}\sum_{c_{b}^{H}=0}^{a_{jb}^{H}}\sum_{c_{w}^{L}=0}^{a_{jw}^{L}}\sum_{c_{b}^{L}=0}^{a_{jb}^{L}}\delta\left(c,a,q\right)\\
 & \times & \sum_{k=1}^{K}\sum_{l=1}^{K}\sum_{k'\geq k}^{K}\sum_{l'\geq l}^{K}\pi_{klk'l'}\times\left(\begin{array}{c}
a_{w}^{H}\\
c_{w}^{H}
\end{array}\right)\left(\begin{array}{c}
a_{b}^{H}\\
c_{b}^{H}
\end{array}\right)\left(\begin{array}{c}
a_{w}^{L}\\
c_{w}^{L}
\end{array}\right)\left(\begin{array}{c}
a_{b}^{L}\\
c_{b}^{L}
\end{array}\right)\\
 & \times & \varrho\left(k,l\right)^{c_{w}^{H}}\left(1-\varrho\left(k,l\right)\right)^{a_{w}^{H}-c_{w}^{H}}\varrho\left(l,k\right)^{c_{b}^{H}}\left(1-\varrho\left(l,k\right)\right)^{a_{b}^{H}-c_{b}^{H}}\\
 & \times & \varrho\left(k',l'\right)^{c_{w}^{L}}\left(1-\varrho\left(k',l'\right)\right)^{a_{w}^{L}-c_{w}^{L}}\varrho\left(l',k'\right)^{c_{b}^{L}}\left(1-\varrho\left(l',k'\right)\right)^{a_{b}^{L}-c_{b}^{L}}.
\end{eqnarray*}
\begin{eqnarray*}
\Pr\left(\delta\left(C_{j},a,q\right)=0,D_{j}=1\right) & = & \sum_{c_{w}^{H}=0}^{a_{jw}^{H}}\sum_{c_{b}^{H}=0}^{a_{jb}^{H}}\sum_{c_{w}^{L}=0}^{a_{jw}^{L}}\sum_{c_{b}^{L}=0}^{a_{jb}^{L}}\left(1-\delta\left(c,a,q\right)\right)\\
 & \times & \sum_{k=1}^{K}\sum_{l=1}^{K}\sum_{k'<k}^{K}\sum_{l'<l}^{K}\pi_{klk'l'}\times\left(\begin{array}{c}
a_{w}^{H}\\
c_{w}^{H}
\end{array}\right)\left(\begin{array}{c}
a_{b}^{H}\\
c_{b}^{H}
\end{array}\right)\left(\begin{array}{c}
a_{w}^{L}\\
c_{w}^{L}
\end{array}\right)\left(\begin{array}{c}
a_{b}^{L}\\
c_{b}^{L}
\end{array}\right)\\
 & \times & \varrho\left(k,l\right)^{c_{w}^{H}}\left(1-\varrho\left(k,l\right)\right)^{a_{w}^{H}-c_{w}^{H}}\varrho\left(l,k\right)^{c_{b}^{H}}\left(1-\varrho\left(l,k\right)\right)^{a_{b}^{H}-c_{b}^{H}}\\
 & \times & \varrho\left(k',l'\right)^{c_{w}^{L}}\left(1-\varrho\left(k',l'\right)\right)^{a_{w}^{L}-c_{w}^{L}}\varrho\left(l',k'\right)^{c_{b}^{L}}\left(1-\varrho\left(l',k'\right)\right)^{a_{b}^{L}-c_{b}^{L}}.
\end{eqnarray*}

Using these expressions, maximal risk can therefore be written as
the solution to the following linear programming problem taking the
form:
\begin{eqnarray*}
\mathcal{R}_{J}^{m}(q) & = & J\times\max_{\left\{ \pi_{klk'l'}\right\} }\sum_{a\in\mathscr{A}_{1}}w_{a}\left\{ \Pr\left(\delta\left(C_{j},a,q\right)=1,D_{j}=0\right)\kappa+\Pr\left(\delta\left(C_{j},a,q\right)=0,D_{j}=1\right)\gamma\right\} 
\end{eqnarray*}
subject to the constraint that the $\pi_{klk'l'}$ are non-negative
and sum to one and that the following moment restrictions hold:
\begin{eqnarray*}
\Pr\left(C_{j}=c|A_{j}=a\right) & = & \left(\begin{array}{c}
a_{w}^{H}\\
c_{w}^{H}
\end{array}\right)\left(\begin{array}{c}
a_{b}^{H}\\
c_{b}^{H}
\end{array}\right)\left(\begin{array}{c}
a_{w}^{L}\\
c_{w}^{L}
\end{array}\right)\left(\begin{array}{c}
a_{b}^{L}\\
c_{b}^{L}
\end{array}\right)\sum_{k=1}^{K}\sum_{l=1}^{K}\sum_{k'=1}^{K}\sum_{l'=1}^{K}\pi_{klk'l'}\\
 & \times & \varrho\left(k,l\right)^{c_{w}^{H}}\left(1-\varrho\left(k,l\right)\right)^{a_{w}^{H}-c_{w}^{H}}\varrho\left(l,k\right)^{c_{b}^{H}}\left(1-\varrho\left(l,k\right)\right)^{a_{b}^{H}-c_{b}^{H}}\\
 & \times & \varrho\left(k',l'\right)^{c_{w}^{L}}\left(1-\varrho\left(k',l'\right)\right)^{a_{w}^{L}-c_{w}^{L}}\varrho\left(l',k'\right)^{c_{b}^{L}}\left(1-\varrho\left(l',k'\right)\right)^{a_{b}^{L}-c_{b}^{L}}.
\end{eqnarray*}
Specifically, we impose these restrictions for the following set of
designs, all of which are present in the \citet{nunley_etal_2015}
experiment:
\begin{center}
$\mathscr{A}_{2}=\left\{ \left(2,0,2,0\right),\left(2,0,0,2\right),\left(0,2,2,0\right),\left(0,2,0,2\right)\right\} .$
\par\end{center}

\noindent To operationalize these constraints, we replace the unknown
cell probabilities $\Pr\left(C_{j}=c|A_{j}=a\right)$ for all $c$
and $a$ in $\mathscr{A}_{2}$ with their predictions under the logit
model reported in column 2 of Table X. Using the logit predictions
serves as a form of smoothing that allows us to avoid problems that
arise with small cells when considering quality variation due to covariates.\pagebreak{}

\includepdf[landscape=true]{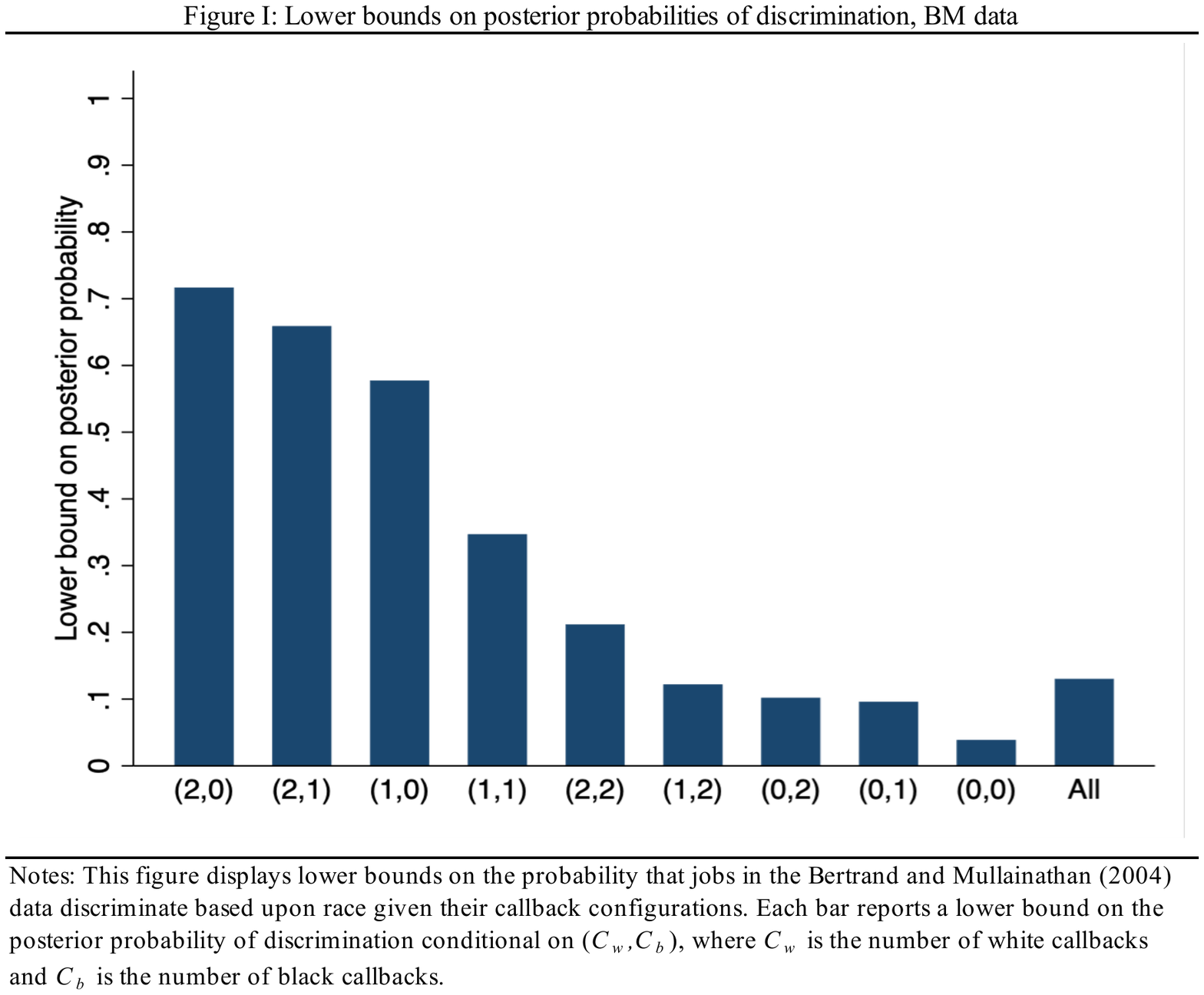}

\includepdf[landscape=true]{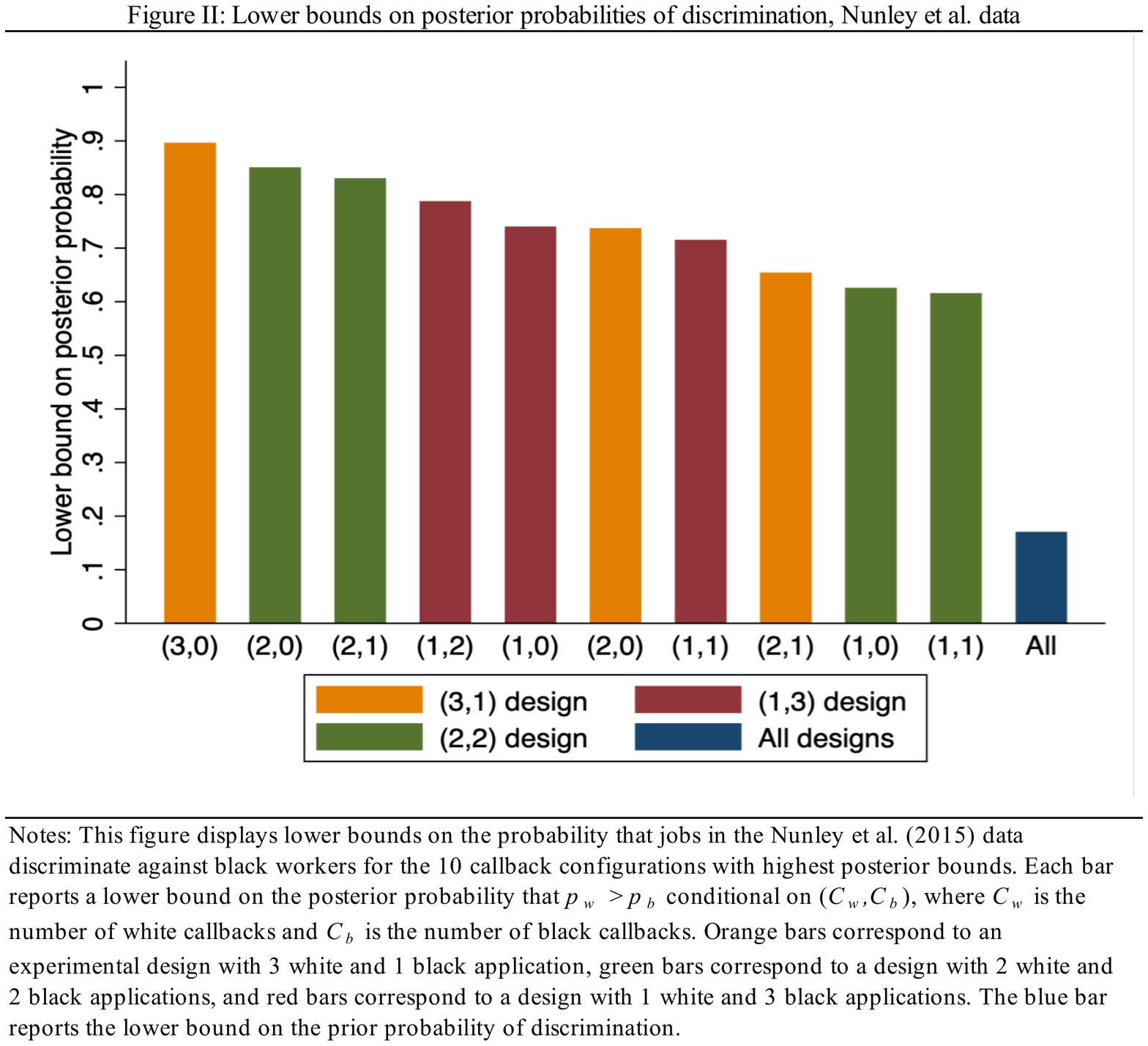}

\includepdf[landscape=true]{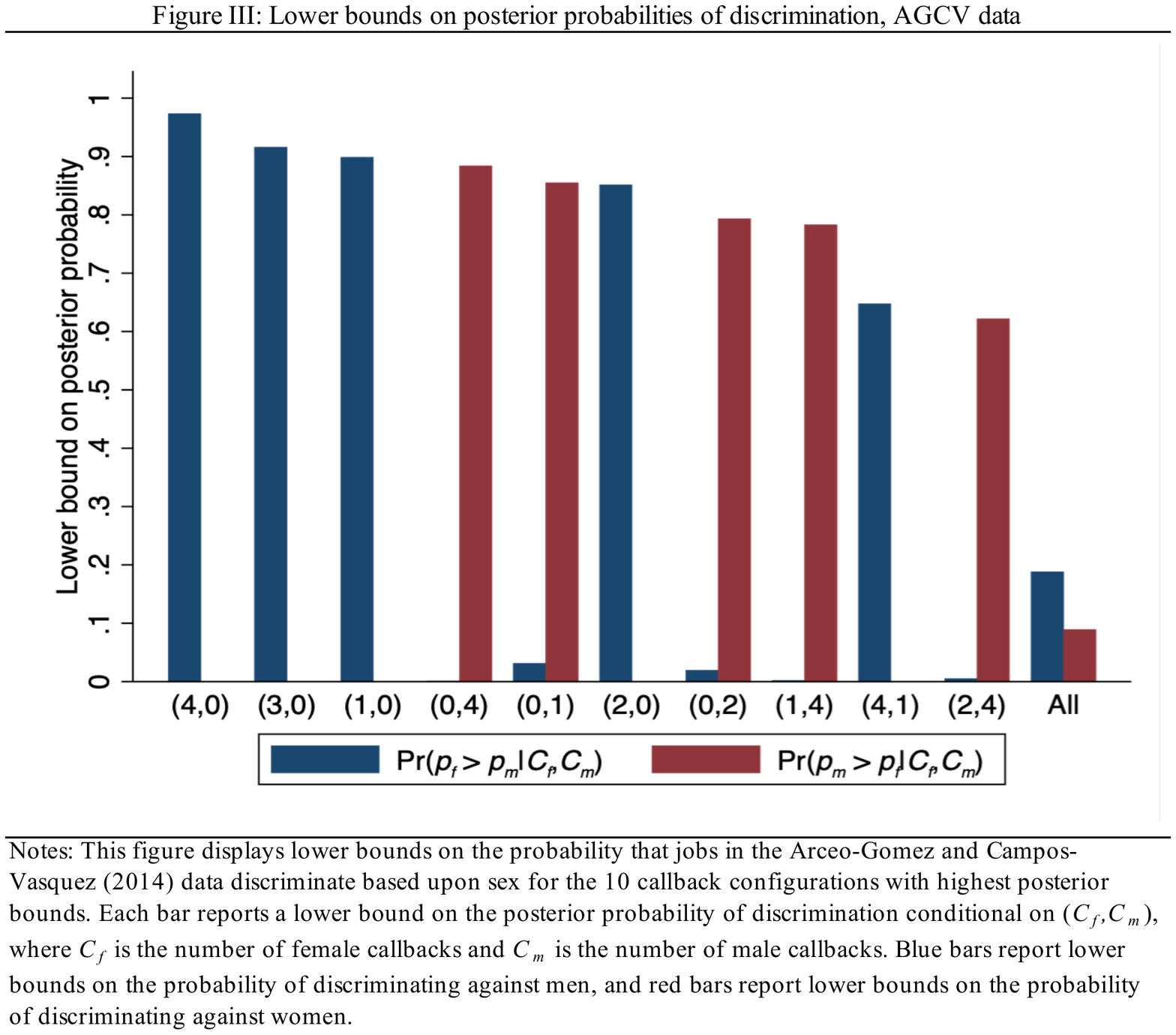}

\includepdf[landscape=true]{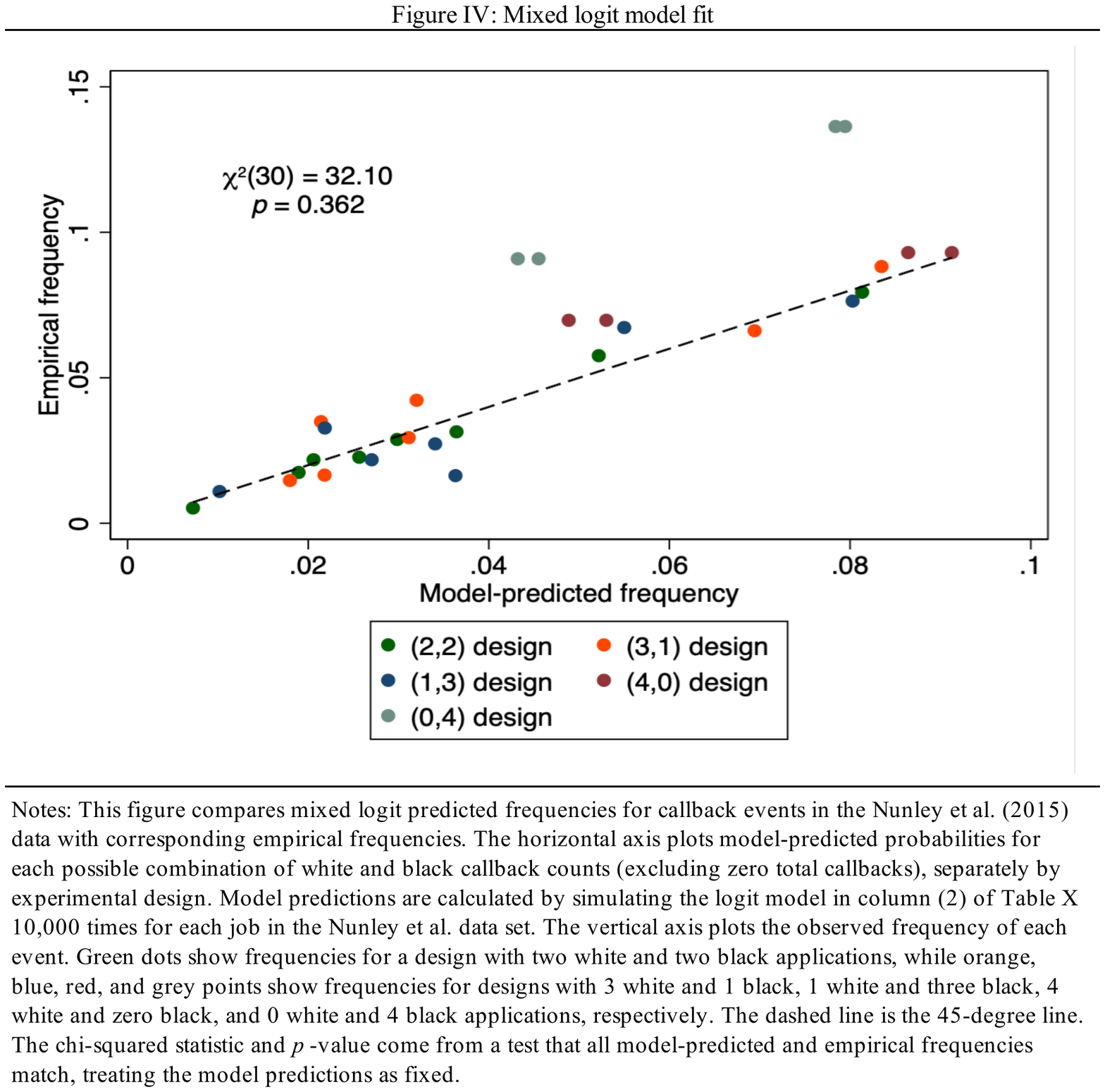}

\includepdf[landscape=true]{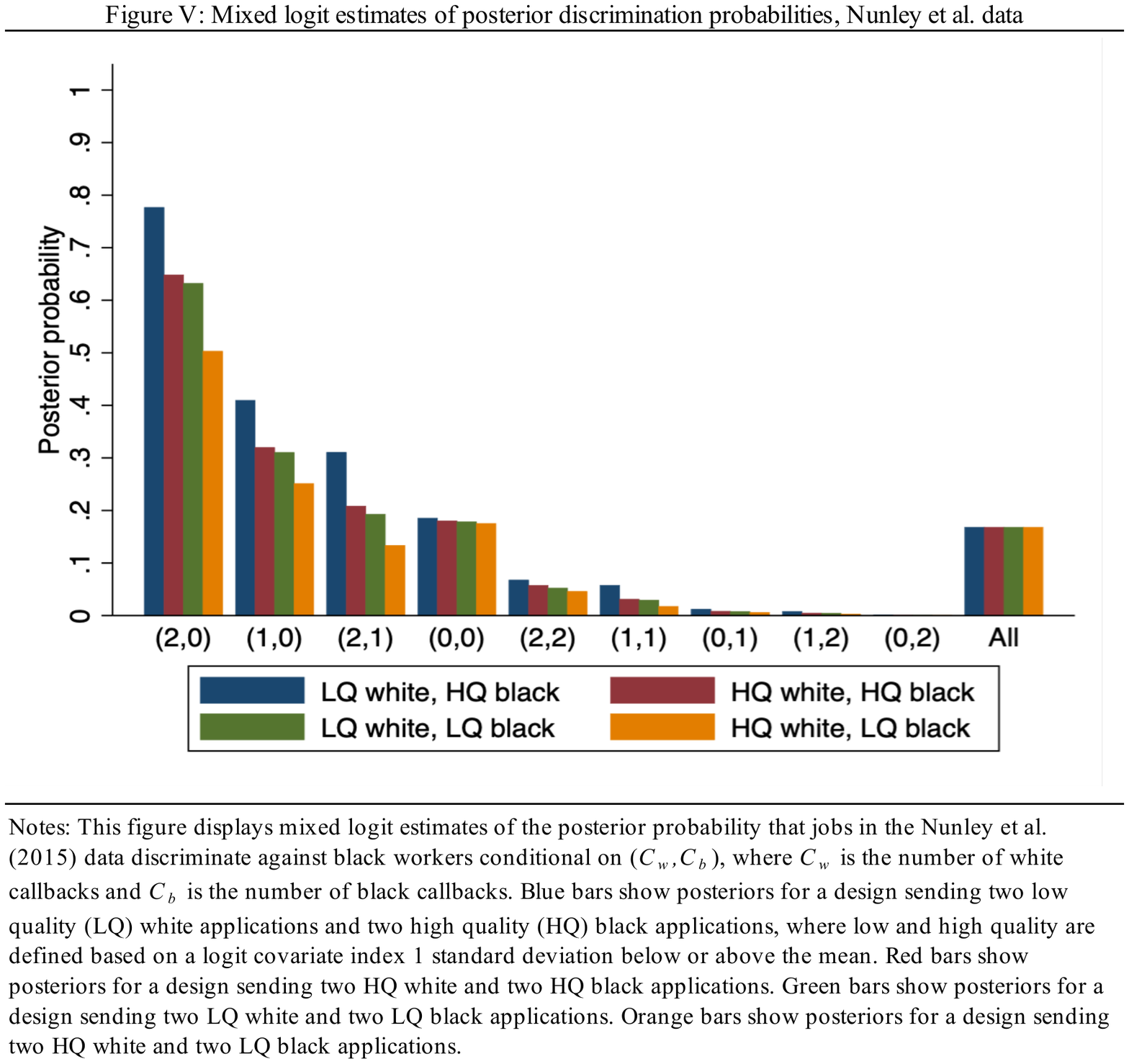}

\includepdf[landscape=true]{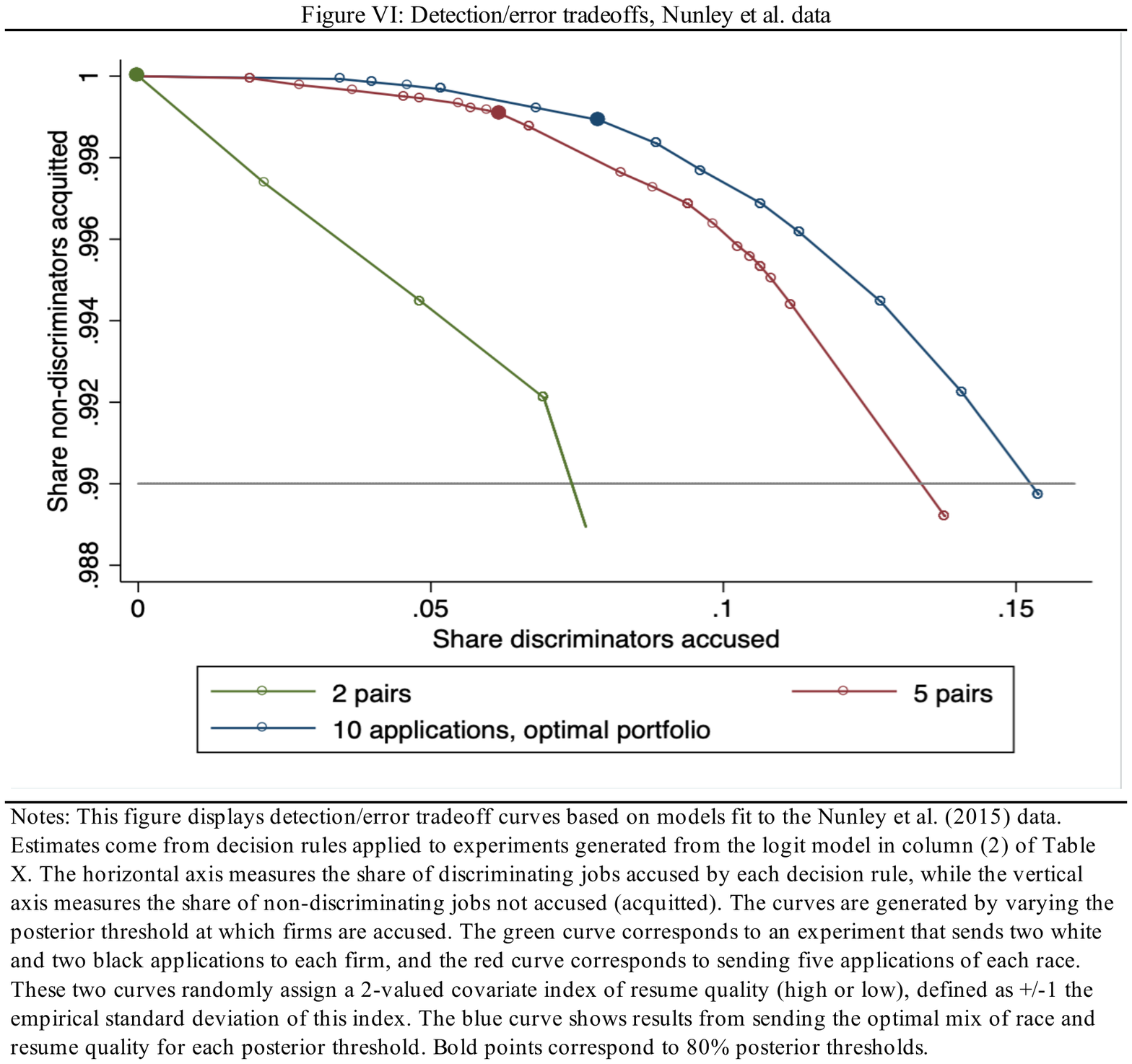}

\includepdf[landscape=true]{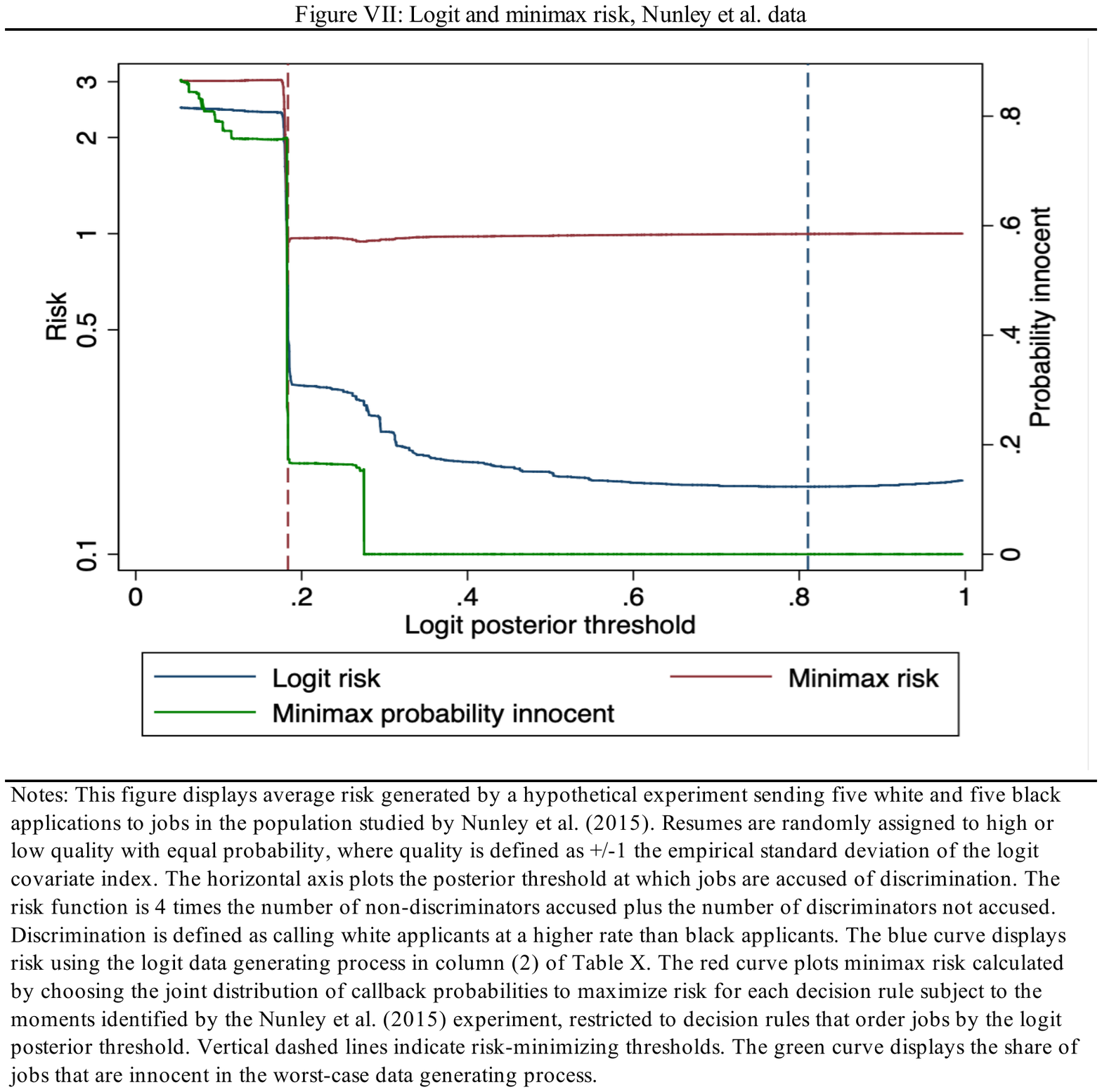}

\pagebreak{}

\includepdf{t1}

\includepdf{t2}

\includepdf{t3}

\includepdf{t4}

\includepdf{t5}

\includepdf[landscape=true]{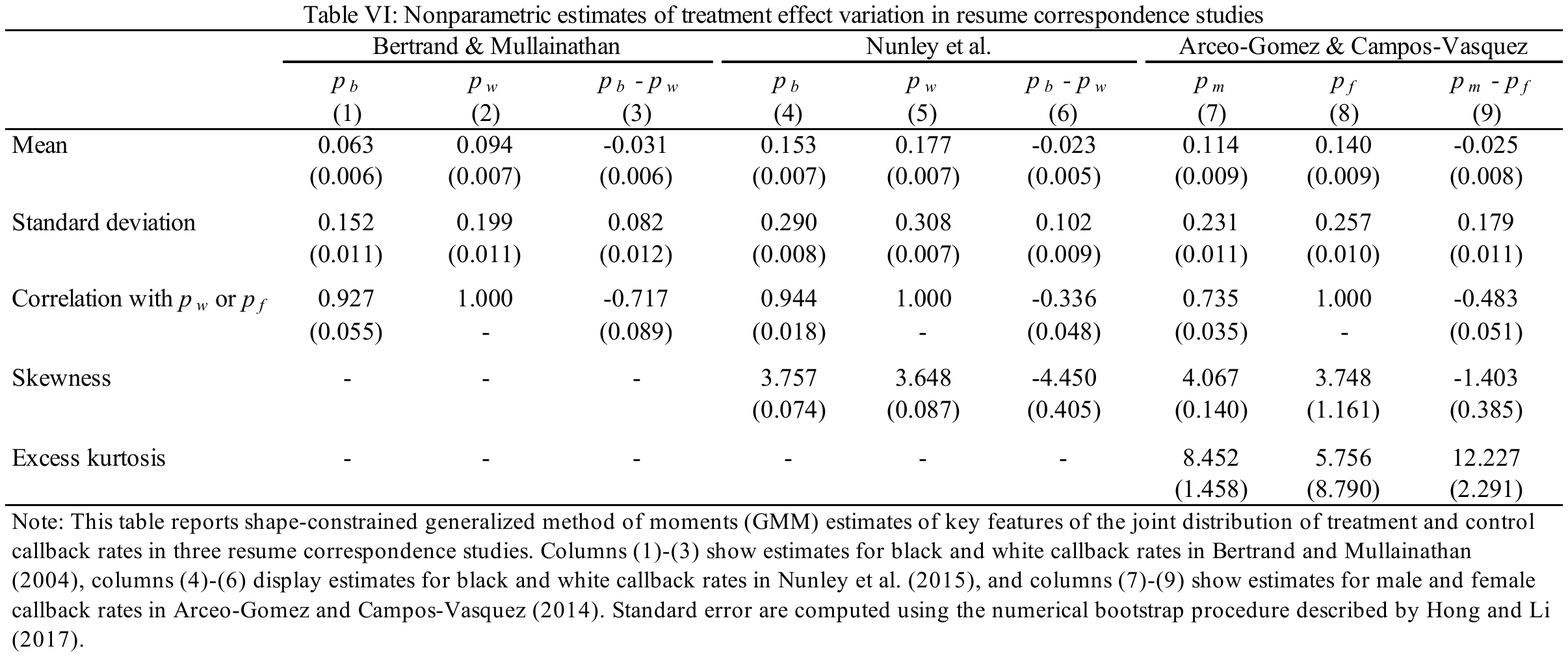}

\includepdf[landscape=true]{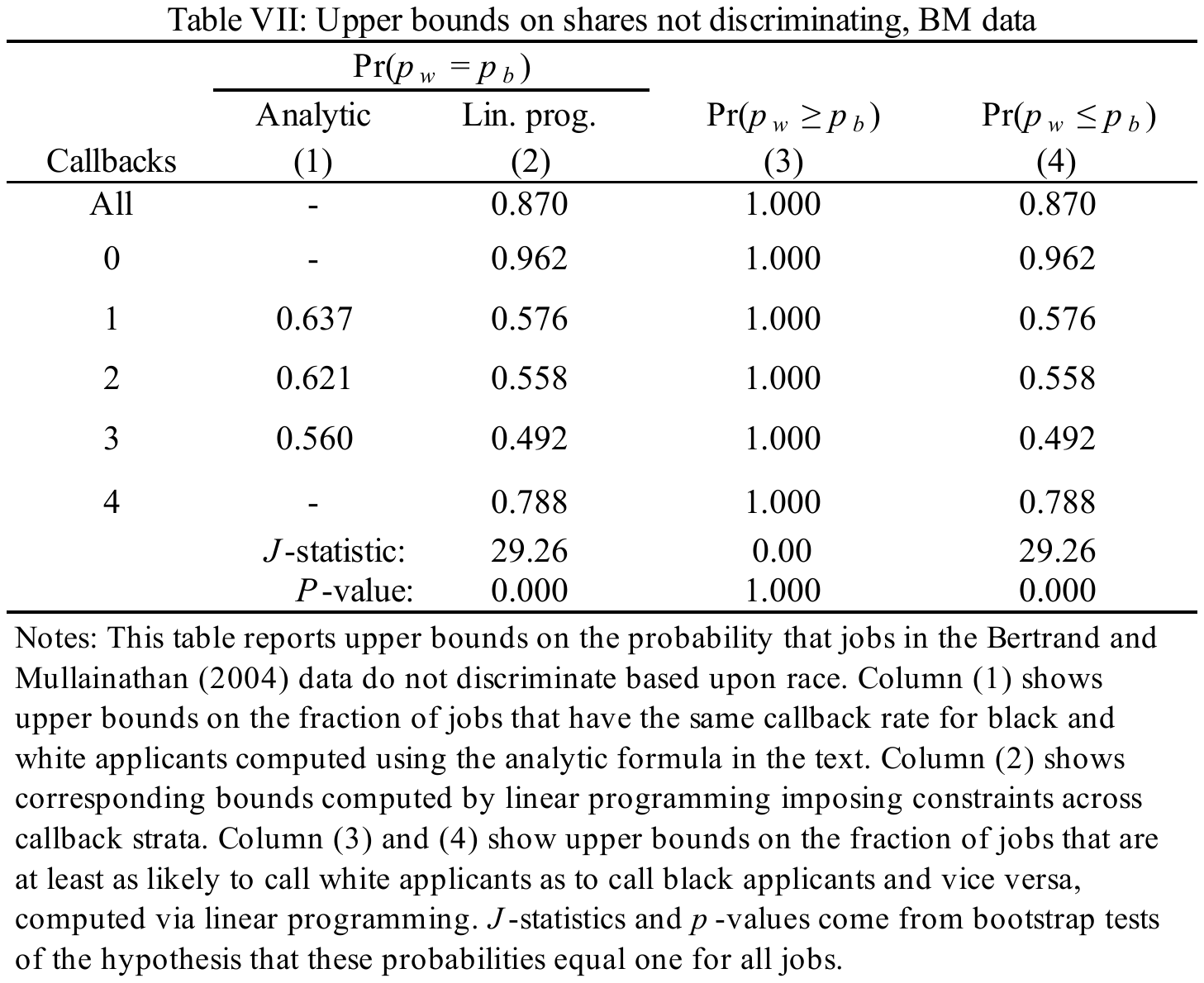}

\includepdf{t8}

\includepdf{t9}

\includepdf{t10}
\end{document}